\DeclareOldFontCommand{\bf}{\normalfont\bfseries}{\mathbf}
\newcommand{\egalratio}{\text{\upshape Egal-ratio}_{n,\alpha}(\mech)\xspace}
\newcommand{\epschange}{$\varepsilon$-change\xspace}
\newcommand{\mech}{\mathcal{M}}
\newcommand{\alloc}{\mathcal{A}}
\newcommand{\allocAdd}{A_{\text{add}}\xspace}
\newcommand{\allocRemove}{A_{\text{remove}}\xspace}
\newcommand*\diff{\mathop{}\!\mathrm{d}} 
\newtheorem{theorem}{Theorem}[section]
\newtheorem{corollary}[theorem]{Corollary}
\newtheorem{proposition}[theorem]{Proposition}
\newtheorem{lemma}[theorem]{Lemma}
\theoremstyle{definition}
\newtheorem{definition}[theorem]{Definition}
\newtheorem{example}[theorem]{Example}
\begin{document}

\title{Truthful Cake Sharing}

\author{
Xiaohui Bei\\Nanyang Technological University
\and
Xinhang Lu\\ UNSW Sydney
\and
Warut Suksompong\\National University of Singapore
}

\date{\vspace{-3ex}}

\maketitle

\begin{abstract}
The classic cake cutting problem concerns the fair allocation of a heterogeneous resource among interested agents.
In this paper, we study a public goods variant of the problem, where instead of competing with one another for the cake, the agents all share the same subset of the cake which must be chosen subject to a length constraint.
We focus on the design of truthful and fair mechanisms in the presence of strategic agents who have piecewise uniform utilities over the cake.
On the one hand, we show that the leximin solution is truthful and moreover maximizes an egalitarian welfare measure among all truthful and position oblivious mechanisms.
On the other hand, we demonstrate that the maximum Nash welfare solution is truthful for two agents but not in general.
Our results assume that mechanisms can block each agent from accessing parts that the agent does not claim to desire; we provide an impossibility result when blocking is not allowed.
\end{abstract}

\section{Introduction}
\label{sec:intro}

A fundamental problem in social choice theory is the fair allocation of scarce resources among multiple agents.
When the resource is heterogeneous and divisible, this problem is commonly known as \emph{cake cutting}, with the cake serving as a metaphor for the heterogeneous resource.
Cake cutting has been extensively studied for over half a century in mathematics and economics, and more recently in computer science \citep{BramsTa96,RobertsonWe98,Procaccia16}.

In this paper, we consider a variant of the classic cake cutting problem where instead of competing with one another for the cake, the agents all share the same subset of the cake, which must be chosen subject to a length constraint. We refer to this setting as \emph{cake sharing}.
The cake sharing problem captures many real-world scenarios, such as when a group of agents need to decide the time periods for which they should reserve a sports facility or a conference room for collective use given their limited budget, or when a group of users seek to agree upon the files to store in a shared cache memory.
Our goal is to design cake sharing mechanisms that are both \emph{truthful} and \emph{fair}.
Truthfulness requires that it should be in every agent's best interest to report her true underlying preferences to the mechanism.
A truthful mechanism makes it easy for agents to participate in, as they do not have to act strategically and reason about beneficial manipulations; it also simplifies the job of the mechanism designer when reasoning about the possible behavior of the agents.
Note that truthfulness by itself is easy to obtain, for example by ignoring the agents' reports completely and allocating a prespecified subset of the cake.
However, this is a patently unfair mechanism, as it leaves any agent who has no value for that subset empty-handed.
Is there a mechanism that is truthful and at the same time satisfies a certain degree of fairness for all agents?

Two mechanisms that have been used in various resource allocation settings and often shown to exhibit attractive fairness properties are the \emph{maximum Nash welfare (MNW) solution} and the \emph{leximin solution}.
The MNW solution chooses an allocation that maximizes the product of the agents' utilities among all feasible allocations.
The leximin solution considers all feasible allocations that maximize the minimum among the agents' utilities; among all such allocations, it considers those maximizing the second smallest utility, and so on.
Due to their optimization nature, both solutions fulfill an important economic efficiency criterion of \emph{Pareto optimality}: there is no other feasible outcome that makes some agent better off and no agent worse off compared to the chosen outcome.
Indeed, any such improved outcome would also be an improvement with respect to the corresponding optimization objective.
Given the broad appeal of the two mechanisms, are they appropriate choices for our cake sharing setting, especially from the truthfulness perspective?

\subsection{Our Results}
\label{sec:our-results}

As is standard in the cake cutting literature, we model the cake as an interval $[0,1]$; for a given parameter $\alpha\in[0,1]$, a subset of length at most $\alpha$ of the cake can be collectively allocated to the agents.
We assume that the agents have \emph{piecewise uniform} utilities, meaning that each agent has a desired subset of the cake which she values uniformly.
Except in Section~\ref{sec:no-blocking}, we also assume that once a mechanism chooses a subset of the cake, it can ``block'' each agent from accessing certain parts of the cake, usually those that the agent does not desire according to her report.
We remark here that blocking can be easily implemented in the aforementioned applications of cake sharing, for example by disallowing agents from accessing part of the cache memory that they do not demand or restricting their access to the sports facility during the times that
they claim to be unavailable.

In Section~\ref{sec:leximin}, we focus on the leximin solution.
Our main technical result establishes the truthfulness of the solution for any number of agents with arbitrary piecewise uniform utilities.
At a high level, our proof proceeds by showing that the leximin solution is immune to certain types of manipulations, and then arguing that this immunity is sufficient to protect the solution against all possible manipulations.
Along the way, we introduce the notion of an \emph{$\varepsilon$-change}---a tiny change from one utility vector or allocation towards another---which may be useful in related settings.
Additionally, we show that each agent receives the same utility in all leximin allocations (which means that tie-breaking is inconsequential) and that such an allocation can be computed in polynomial time.

Since truthfulness by itself can be trivially obtained as we explained earlier, we consider in Section~\ref{sec:egalitarian} the fairness of mechanisms.
We adapt the well-known notion of \emph{proportionality} from cake cutting and measure fairness using the \emph{egalitarian ratio}, which is defined as the worst-case egalitarian welfare that the mechanism provides over all instances, where we normalize the agents' utilities for the entire cake when computing their egalitarian welfare.
We show that for any $\alpha$ and number of agents $n$, the leximin solution has egalitarian ratio exactly $\alpha/(n-(n-1)\alpha)$.
Moreover, we prove that this ratio is already optimal among all mechanisms that are truthful and \emph{position oblivious} (see Definition~\ref{def:position-oblivious}).
Our results in Sections~\ref{sec:leximin} and \ref{sec:egalitarian} establish the leximin solution as an attractive mechanism in the setting of cake sharing.

In Section~\ref{sec:MNW}, we turn our attention to the MNW solution.
We show that the solution is equivalent to the leximin solution in the case of two agents, and is therefore truthful in that case.
In general, however, a result of \citet[Theorem~3]{AzizBoMo20} implies the non-truthfulness of the MNW solution in our setting.
We strengthen their result by showing that MNW is not truthful even when an agent is only allowed to report a subset of her true desired piece.\footnote{We remark that subset manipulation is a highly restricted form of manipulation.
Indeed, \citet{Peters18} noted that reporting a subset is a ``particularly simple fashion'' of manipulating, and used subset manipulation as the ``official notion'' of truthfulness.}
Moreover, in contrast to Aziz et al.'s example, the symmetry structure in our example allows us to provide a relatively short proof of the non-truthfulness that can be easily verified by hand.

Finally, we demonstrate in Section~\ref{sec:no-blocking} that the ability to block is crucial for the truthfulness of mechanisms.
In particular, we show that no truthful, Pareto optimal, and position oblivious mechanism can achieve a positive egalitarian ratio when blocking is not allowed.

\subsection{Related Work}
\label{sec:related}

While the model of cake sharing is new to the best of our knowledge, the selection of a collective subset from a given set subject to a size or budget constraint has been studied in several lines of work.
In \emph{multiwinner voting}, the goal is to choose a certain number of candidates to form a committee, where criteria can include excellence and diversity---see the surveys by \citet{FaliszewskiSkSl17} and \citet{LacknerSk21}.
In that setting, \citet{Peters18} proved that no rule can simultaneously satisfy a form of fairness and a form of truthfulness when agents have approval preferences (analogous to piecewise uniform utilities in our setting).
A key difference between multiwinner voting and cake sharing is that the candidates in the former are discrete and cannot be divided into arbitrarily small pieces.
Variants where discrete items instead of candidates are selected have also been considered \citep{SkowronFaLa16,ManurangsiSu19}.

A long list of recent papers have addressed the problem of \emph{participatory budgeting}, where the citizens decide how a public budget should be spent on possible projects in their community---see the survey by \citet{AzizSh21}.
Some models assume that projects are discrete (each project can either be fully completed or not at all), while others assume that they are divisible (partial completion of a project yields some utility to the citizens).
In either case, there is a prespecified set of projects and the preference of an agent within a project is uniform, so participatory budgeting cannot capture our cake sharing model where there is no predetermined division of the cake into homogeneous units.

\citet{AzizBoMo20} studied a probabilistic voting setting in which agents have dichotomous preferences over $m$ alternatives and the goal is to output a probability distribution over the alternatives.
Their model corresponds to a special case of our model where for each $j=1,\dots,m$, the interval $[(j-1)/m, j/m]$ represents alternative~$j$, and $\alpha = 1/m$; in this special case, agents are not allowed to have ``breakpoints'' that are not multiples of $1/m$ in their utility functions (see the precise definition of a breakpoint in Section~\ref{sec:prelim}).
Like us, Aziz et al.~showed that the leximin solution is truthful.\footnote{They called the notion \emph{excludable strategyproofness}, which is equivalent to truthfulness with blocking in our setting.}
Our results on the leximin solution generalize and strengthen theirs in three important ways.
First, we allow agents to report \emph{arbitrary} breakpoints---this considerably enlarges the strategy space of the agents and introduces an aspect that cannot be captured by their model.
Second, our model allows an arbitrary value of $\alpha$ instead of only $\alpha = 1/m$.
Third, we establish a tight bound on the egalitarian ratio and show that the leximin solution achieves this bound.
Therefore, we believe that overall, our results make a significantly stronger case in favor of the leximin solution.

\citet{FriedmanGkPs19} investigated a model in which agents share a cache memory unit, focusing on truthfulness and fairness like we do.
In their model, each agent has a private file that no other agent is interested in, and there is a large public file that may be of interest to multiple agents.
The challenge of the mechanism is to elicit the true ratio between each agent's utility for the public file and that for her private file.
These authors demonstrated that the ability to block can also help mechanisms achieve better guarantees in their setting, in particular by preventing ``free riding''.

Truthfulness in cake \emph{cutting} has been considered in several papers \citep{MayaNi12,ChenLaPa13,KurokawaLaPr13,AzizYe14,BranzeiMi15,BeiChHu17,MenonLa17,BeiHuSu20,Tao21}.
Like our paper, a number of these papers also address the case of piecewise uniform utilities.
Two important fairness properties in cake cutting are \emph{envy-freeness} and \emph{proportionality}.
Note that envy-freeness is always fulfilled in our setting (as long as the mechanism does not block any agent's valued cake), since all agents share the same subset of the cake.
On the other hand, proportionality has a similar flavor as our egalitarian ratio notion, where we want to guarantee a certain level of utility for every agent.

Finally, both the leximin and MNW solutions have been examined in a variety of settings and often shown to exhibit desirable properties \citep{BogomolnaiaMo04,KurokawaPrSh18,CaragiannisKuMo19,SegalhaleviSz19,AzizBoMo20,HalpernPrPs20,PlautRo20,BrandlBrPe20}.

\section{Preliminaries}
\label{sec:prelim}

Our setting includes a set of agents denoted by $N = \{1, 2, \dots, n\}$ and a heterogeneous divisible good (or \emph{cake}) represented by the normalized interval $[0, 1]$.
A \emph{piece of cake} is a union of finitely many disjoint (closed) intervals.
Denote by $\ell(I)$ the length of an interval $I$, that is, $\ell([a, b]) = b - a$.
For a piece of cake $S$ consisting of a set of intervals $\mathcal{I}_S$, we denote $\ell(S) = \sum_{I \in \mathcal{I}_S} \ell(I)$.
Each agent $i \in N$ is endowed with a \emph{density function} $f_i \colon [0, 1] \to \mathbb{R}_{\geq 0}$, which captures how the agent values different parts of the cake.
We assume that the agents have \emph{piecewise uniform} utilities: for each agent~$i$, each part of the cake is either desired or undesired, and the density function $f_i$ takes on the value $1$ for all desired parts and $0$ for all undesired parts.
Let $W_i \subseteq [0, 1]$ denote the (not necessarily contiguous) piece of cake on which $f_i = 1$.
The utility of agent $i$ for any piece of cake $S$ is given by $u_i(S) := \ell(S\cap W_i)$.\footnote{Some cake cutting papers normalize the utility functions so that $u_i([0,1])=1$ for all $i\in N$; we do not follow this convention.
See also the end of Section~\ref{sec:leximin} for a discussion of normalization.}
We assume that $\ell(W_i) > 0$ for every~$i$, since we can simply ignore an agent~$i$ with $\ell(W_i) = 0$.

Let $\alpha\in[0,1]$ be a given parameter.
We refer to a setting with agents, their density functions, and the parameter $\alpha$ as an \emph{instance}.
A \emph{mechanism}~$\mech(R)$ chooses from any given instance~$R$ a piece of cake~$A$ with $\ell(A)\le\alpha$.
However, this does not mean all agents have full access to $A$, because we allow the mechanism to \emph{block} each agent from accessing certain parts of the selected piece.
Specifically, after choosing $A$, the mechanism assigns piece $A_i\subseteq A$ to agent~$i$; we call $\alloc=(A,A_1,\dots,A_n)$ an \emph{allocation}.
The utility of agent~$i$ from the allocation~$\alloc$ is $u_i(A_i)$.
Since the cases $\alpha=0$ and $\alpha=1$ are trivial,
 we assume from now on that $\alpha\in(0,1)$.
Given an instance, every point that is a left or right endpoint of an interval in $W_i$ for at least one $i$ is called a \emph{breakpoint}; the points $0$ and $1$ are also considered to be breakpoints.
Observe that for any instance, the agents' utilities for a piece of cake $S$ depend only on the amounts of cake between consecutive pairs of breakpoints included in $S$.

We now define the central property of our paper.

\begin{definition}[Truthfulness]
\label{def:truthful}
A mechanism is \emph{truthful} if for any instance $R$ with $\mech(R) = (A, A_1, \dots, A_n)$ and any agent $i\in N$, if the agent reports $W_i'\ne W_i$ and the mechanism returns the allocation $\alloc'=(A',A_1',\dots,A_n')$ on the modified instance, then $u_i(A_i)\ge u_i(A_i')$.
\end{definition}

Next, we define the two main mechanisms in this paper.

\begin{definition}[Leximin]
\label{def:leximin}
Given an instance, the \emph{leximin solution} considers pieces of cake $A$ with $\ell(A)\le\alpha$ such that the minimum among the utilities $u_1(A),\dots,u_n(A)$ is maximized; among all such pieces $A$, it considers those for which the second smallest utility is maximized, and so on, until after considering the largest utility, it chooses one of the pieces $A$ that remain.
It then assigns $A_i = A\cap W_i$ for all $i\in N$.
\end{definition}

\begin{definition}[MNW]
\label{def:MNW}
Given an instance, the \emph{maximum Nash welfare (MNW) solution} chooses a piece of cake $A$ with $\ell(A)\le\alpha$ such that the product $\prod_{i\in N}u_i(A)$ is maximized.
It then assigns $A_i = A\cap W_i$ for all $i\in N$.
\end{definition}

The following example illustrates some of our definitions.

\begin{example}\label{ex:illustration}
Let $\alpha = 1/2$.
Consider an instance with two agents whose utility functions are given as follows:
\[
W_1 = [0, 1/2], \qquad W_2 = [1/4, 7/8].
\]
A possible piece $A$ selected by both leximin and MNW is $A = [1/8, 5/8]$.\footnote{We show in Theorem~\ref{thm:MNW-leximin-equivalence} that leximin and MNW are equivalent in the case of two agents.}
Then, agent~$1$ has access to the piece $A_1 = A \cap W_1 = [1/8, 1/2]$ while agent~$2$ has access to the piece $A_2 = A \cap W_2 = [1/4, 5/8]$.
Both agents receive utility $3/8$.
\begin{center}
\begin{tikzpicture}[x=9cm,y=0.8cm]
\node[label=left:{Cake}] at (0,0) {};
\draw[|-] (0,0) to (0.125,0);
\node at (0,0.531) {$0$};
\node at (0.125,0.5) {$1/8$};
\draw[|-,very thick] (0.125,0) to (0.25,0);
\node at (0.25,0.5) {$1/4$};
\draw[|-|,very thick] (0.25,0) to (0.5,0);
\node at (0.5,0.5) {$1/2$};
\draw[-|,very thick] (0.5,0) to (0.625,0);
\node at (0.625,0.5) {$5/8$};
\draw[-|] (0.625,0) to (0.875,0);
\node at (0.875,0.5) {$7/8$};
\draw[-|] (0.875,0) to (1,0);
\node at (1,0.527) {$1$};

\node[label=left:{$W_1$}] at (0,-1) {};
\draw[|-] (0,-1) to (0.125,-1);
\draw[|-|,very thick] (0.125,-1) to (0.5,-1);

\node[label=left:{$W_2$}] at (0,-2) {};
\draw[|-|,very thick] (0.25,-2) to (0.625,-2);
\draw[-|] (0.625,-2) to (0.875,-2);
\end{tikzpicture}
\end{center}
\end{example}

Since both leximin and MNW always choose $A_i = A\cap W_i$ for all $i\in N$, we can represent an allocation $\alloc$ simply by the set~$A$ when we discuss these mechanisms.
Note that $u_i(A_i) = \ell(A_i\cap W_i) = \ell(A\cap W_i) = u_i(A)$, so it also suffices to consider the agents' utilities with respect to $A$.
By a standard compactness argument and our observation above that the agents' utilities depend only on the amounts of cake between breakpoints, both solutions are well-defined (i.e., the desired maxima are attained).
There may be several maximizing allocations $A$ to choose from, in which case we generally allow arbitrary tie-breaking---as we will see later, this tie-breaking does not influence the utility that each agent receives and therefore does not play a significant role.
We call an allocation that is returned by the MNW solution (resp., leximin solution) under some tie-breaking an \emph{MNW allocation} (resp., leximin allocation).
By our assumptions that $\alpha > 0$ and $\ell(W_i) > 0$ for every~$i$, all MNW allocations and leximin allocations give every agent a strictly positive utility.

\section{Leximin Solution}
\label{sec:leximin}

In this section, we consider the leximin solution.
We begin by establishing basic properties of the solution.

Our first result is that the utility of each agent is the same in all leximin allocations, which means that tie-breaking is not an important issue.
The proof proceeds by assuming for contradiction that two leximin allocations give some agent different utilities, and arguing that the ``average'' of these two allocations would have been a better choice with respect to the leximin ordering.

\begin{proposition}
\label{prop:leximin-unique}
Given any instance, for each agent~$i$, the utility that $i$ receives is the same in all leximin allocations.
\end{proposition}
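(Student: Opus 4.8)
The plan is to argue by contradiction: suppose two leximin allocations $A$ and $B$ give some agent different utilities. Write $u=(u_1(A),\dots,u_n(A))$ and $v=(u_1(B),\dots,u_n(B))$, so that $u\ne v$. Since both allocations are leximin-optimal, they realize the same lexicographically maximal sorted utility vector $s=(s_1\le\dots\le s_n)$; thus $u$ and $v$ are two \emph{distinct rearrangements} of the same $s$. I would then produce a single feasible allocation whose utility profile is the coordinatewise average $w=\tfrac12(u+v)$ and show that $w$ is strictly better than $s$ in the leximin order, contradicting the optimality of $A$ and $B$.

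To build this averaging allocation, I would first invoke the breakpoint structure from Section~\ref{sec:prelim}: the breakpoints partition $[0,1]$ into finitely many elementary intervals, each of which every agent values uniformly (desires entirely or not at all). For each elementary interval $J$, let $a_J=\ell(A\cap J)$ and $b_J=\ell(B\cap J)$ be the lengths the two allocations select from $J$; since $a_J,b_J\le\ell(J)$, I can select an arbitrary subinterval of $J$ of length $\tfrac12(a_J+b_J)\le\ell(J)$. Taking the union over all $J$ yields a valid piece of cake $C$ with $\ell(C)=\tfrac12(\ell(A)+\ell(B))\le\alpha$, and because each agent's utility depends only on the selected length within each homogeneous interval, $u_i(C)=\tfrac12(u_i(A)+u_i(B))$ for every $i$. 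In effect this shows the set of achievable utility vectors is convex, with $C$ realizing the midpoint $w$.

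The crux is to show $\mathrm{sort}(w)\succ_{\mathrm{lex}} s$. Here I would use that the sum of the $k$ smallest coordinates of a vector $z$, namely $\min_{|T|=k}\sum_{i\in T}z_i$, is a concave function of $z$; since $u$ and $v$ share the sorted profile $s$, this gives that for every $k$ the sum of the $k$ smallest entries of $w$ is at least that of $s$, which already yields $w\succeq_{\mathrm{lex}} s$. To upgrade this to a strict comparison, I would locate the smallest value level at which $u$ and $v$ assign their coordinates to different index sets (such a level exists because $u\ne v$): below that level both vectors agree coordinatewise, so $w$ matches $s$ there, while at that level averaging strictly lifts some coordinate above the shared value $t$ — any coordinate equal to $t$ in one vector but exceeding $t$ in the other becomes strictly larger than $t$ in $w$ — so $w$ has strictly fewer coordinates at value $t$ than $s$ does, producing a strict leximin gain.

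I expect this strictness step to be the main obstacle. The weak domination via concavity is routine, but because $A$ and $B$ already attain the same sorted vector (so the total utility and every prefix sum start out equal), the benefit of averaging is delicate: one must track exactly how the smallest coordinates are raised and confirm that the first position where the sorted profiles of $w$ and $s$ differ is an increase rather than a decrease. Once the strict improvement is established, the feasibility of $C$ contradicts the leximin-optimality of $A$, forcing $u=v$ and hence equal utilities for every agent across all leximin allocations.
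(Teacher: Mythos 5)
Your proposal is correct and follows essentially the same route as the paper's proof: average the two leximin allocations interval-by-interval using the breakpoint structure to get a feasible allocation realizing the midpoint utility profile, then obtain a strict leximin improvement by examining the lowest utility level at which the two profiles differ and noting that the average has strictly fewer coordinates at that level while agreeing below it. The extra concavity/prefix-sum argument for weak domination is sound but not needed, since the level-by-level strictness argument (which is exactly the paper's agent-$j$ argument in different clothing) already completes the contradiction.
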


\begin{proof}
Assume for contradiction that two leximin allocations, $A$ and $A'$, give some agent different utilities.
Let $A''$ be an allocation such that for each pair of consecutive breakpoints, the amount of cake between those breakpoints included in $A''$ is the average of the corresponding amounts for $A$ and $A'$.
By linearity, $A''$ is a feasible allocation, and $u_i(A'') = \frac{1}{2}(u_i(A) + u_i(A'))$ for every $i\in N$.

Since the leximin ordering is a total order, the multiset of utilities that the $n$ agents receive in $A$ must be the same as the corresponding multiset in $A'$.
Let $j$ be an agent with the smallest $\min\{u_j(A), u_j(A')\}$ such that $u_j(A)\ne u_j(A')$, and assume without loss of generality that $u_j(A) < u_j(A')$.
All agents $k$ with $\min\{u_k(A), u_k(A')\} < \min\{u_j(A), u_j(A')\}$ have $u_k(A) = u_k(A')$, and this latter quantity is also equal to $u_k(A'')$.
On the other hand, we have $u_j(A'') = \frac{1}{2}(u_j(A) + u_j(A')) > u_j(A)$, which means that the number of agents who receive utility exactly $u_j(A)$ in $A''$ is strictly less than the corresponding numbers for $A$ and $A'$.
Hence, $A''$ is a better allocation with respect to the leximin ordering than $A$ and $A'$, a contradiction.
\end{proof}

Next, we show that a leximin allocation can be computed efficiently via a linear programming-based approach similar to the one used by \citet{AiriauAzCa19} in the context of portioning.
Recall that in our setting, the utility functions of the agents can be described explicitly by the sets $W_i$.

\begin{proposition}
\label{prop:leximin-algorithm}
There exists an algorithm that computes a leximin allocation in time polynomial in the input size.
\end{proposition}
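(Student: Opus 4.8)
The plan is to reduce the problem to a sequence of linear programs over a discretized representation of the cake, following the leximin-via-LP paradigm of \citet{AiriauAzCa19}. First I would exploit the observation made in Section~\ref{sec:prelim} that utilities depend only on the amounts of cake between consecutive breakpoints. Let $B_1,\dots,B_m$ be the elementary intervals induced by the breakpoints; the number $m$ of such intervals is at most the input size. Introduce a variable $x_k\in[0,\ell(B_k)]$ for the amount of $B_k$ placed into the selected piece. Since no agent has a breakpoint inside $B_k$, each agent either desires all of $B_k$ or none of it, so the utility of agent~$i$ is the linear function $u_i(x)=\sum_{k:\,B_k\subseteq W_i}x_k$, and feasibility is captured by the polytope $P=\{x:\ 0\le x_k\le\ell(B_k)\text{ for all }k,\ \sum_k x_k\le\alpha\}$. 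Thus computing the leximin solution amounts to lexicographically maximizing the sorted vector of the linear functions $u_1(x),\dots,u_n(x)$ over $x\in P$.

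Next I would run the standard iterative procedure. Maintain a set $S$ of \emph{active} agents (initially $S=N$) together with fixed lower bounds $c_i$ for agents already removed. In each round, solve the linear program $\max\,t$ subject to $x\in P$, the equalities $u_i(x)=c_i$ for removed agents, and $u_i(x)\ge t$ for $i\in S$; call the optimum $t^\ast$. I then identify the agents whose utility is necessarily pinned to $t^\ast$: for each $j\in S$, solve $\max\,u_j(x)$ under the same constraints with the additional requirement $u_i(x)\ge t^\ast$ for all $i\in S$, and declare $j$ \emph{saturated} if this optimum equals $t^\ast$. Every saturated agent is removed from $S$ with $c_j:=t^\ast$, and the process repeats until $S=\emptyset$.

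The crucial lemma is that at least one agent is saturated in each round, which bounds the number of rounds by $n$. This follows from a convexity/averaging argument in the spirit of the proof of Proposition~\ref{prop:leximin-unique}: if no $j\in S$ were saturated, then for each $j$ there would be a feasible $x^{(j)}$ with $u_j(x^{(j)})>t^\ast$ and $u_i(x^{(j)})\ge t^\ast$ for all $i\in S$; averaging the $x^{(j)}$ yields a feasible point at which every active agent has utility strictly above $t^\ast$, contradicting the maximality of $t^\ast$. Hence the algorithm solves $O(n^2)$ linear programs, each with at most $m$ variables and $O(n+m)$ constraints, and since linear programs with rational data are solvable in polynomial time, the whole procedure runs in polynomial time.

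Finally I would argue correctness and recover an actual allocation. By construction the values $t^\ast$ produced across the rounds (with multiplicities equal to the number of agents saturating at each value) form exactly the sorted leximin utility profile, so the terminal point $x^\ast\in P$ realizes the leximin utilities, whose uniqueness is guaranteed by Proposition~\ref{prop:leximin-unique}. I expect the verification that this round structure coincides with the leximin definition to be the main technical point, since one must check that fixing each saturated agent at $t^\ast$ never discards a leximin-optimal profile. To turn $x^\ast$ into a piece of cake, within each elementary interval $B_k$ select an arbitrary sub-interval of length $x_k^\ast$; the resulting piece $A$ satisfies $\ell(A)\le\alpha$, and because every agent desires either all or none of each $B_k$, setting $A_i=A\cap W_i$ reproduces the utilities $u_i(x^\ast)$. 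This $A$ is therefore a leximin allocation, computed in polynomial time.
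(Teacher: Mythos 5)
Your proposal is correct and follows essentially the same route as the paper: the same discretization of the cake into elementary intervals induced by breakpoints, the same iterative linear-programming scheme that fixes saturated agents at the round optimum $t^\ast$, and the same averaging argument (in the spirit of Proposition~\ref{prop:leximin-unique}) showing that at least one agent saturates per round, giving $O(n^2)$ polynomial-size LPs. Your saturation test via $\max u_j(x)$ is just an equivalent reformulation of the paper's $\max\varepsilon$ test, and your explicit recovery of a concrete piece of cake from $x^\ast$ is a detail the paper leaves implicit.
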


\begin{proof}
First, we divide the cake into a set $M$ of intervals $I_1,\dots,I_m$ using all breakpoints, so each interval is either desired in its entirety or not desired at all by each agent.
Let $x_j$ denote the length of the interval $I_j$ that we include in our allocation.
Thus, the utility of agent~$i$ can be written as $\sum_{j=1}^m \mathbf{1}_{I_j\subseteq W_i}x_j$, where $\mathbf{1}_X$ denotes the indicator variable for event~$X$.

We proceed by formulating linear programs.
Initially, the set $N'$ of agents whose utility we have already fixed is empty.
We determine the smallest utility in a leximin allocation by solving for the maximum $t^*$ such that the utility of every agent is at least $t^*$.
We then determine an agent who receives utility $t^*$ in a leximin allocation---to this end, for each agent, we solve for the maximum $\varepsilon$ such that this agent receives utility at least $t^*+\varepsilon$ and every other agent receives utility at least $t^*$.
We choose an agent who returns $\varepsilon = 0$, fix the utility of this agent~$i'$ by setting $t_{i'} = t^*$, and continue by finding the next smallest utility among the remaining agents.
The pseudocode of our algorithm is given as Algorithm~\ref{alg:leximin}.

\begin{algorithm}[!ht]
\caption{Computing a leximin allocation}\label{alg:leximin}
\begin{algorithmic}[1]
\Procedure{Leximin$(N, M, \{W_i\}_{i\in[n]})$}{}
\State $N'\leftarrow \emptyset$
\State $t_i\leftarrow 0, \forall i\in N$
\While{$N'\ne N$}
\State Solve the following linear program:

maximize $t^*$ subject to

$\sum_{j=1}^m \mathbf{1}_{I_j\subseteq W_i}\cdot x_j\ge t^*$ \quad $\forall i\in N\setminus N'$

$\sum_{j=1}^m \mathbf{1}_{I_j\subseteq W_i}\cdot x_j= t_i$ \quad\kern 0.5mm  $\forall i\in N'$

$\sum_{j=1}^m x_j\le \alpha$

$0\le x_j\le \ell(I_j)$ \qquad\qquad\hspace{0.5mm} $\forall j\in\{1,\dots,m\}$
\State Set $t^*$ to be the solution of the linear program.
\For{$i'\in N\setminus N'$}
\State Solve the following linear program:

maximize $\varepsilon$ subject to

$\sum_{j=1}^m \mathbf{1}_{I_j\subseteq W_{i'}}\cdot x_j\ge t^* + \varepsilon$

$\sum_{j=1}^m \mathbf{1}_{I_j\subseteq W_i}\cdot x_j\ge t^*$ \quad $\forall i\in N\setminus N'$

$\sum_{j=1}^m \mathbf{1}_{I_j\subseteq W_i}\cdot x_j= t_i$ \quad\kern 0.5mm  $\forall i\in N'$

$\sum_{j=1}^m x_j\le \alpha$

$0\le x_j\le \ell(I_j)$ \qquad\qquad\hspace{0.5mm} $\forall j\in\{1,\dots,m\}$
\If{$\varepsilon = 0$}
\State $N'\leftarrow N'\cup\{i'\}$
\State $t_{i'}\leftarrow t^*$
\EndIf
\EndFor
\EndWhile
\State \Return the solution $\mathbf{x}$ of the last linear program solved
\EndProcedure
\end{algorithmic}
\end{algorithm}

Since our algorithm requires solving $O(n^2)$ linear programs, it runs in polynomial time.
We now establish its correctness.
Consider the first iteration of the \textbf{while}-loop, and the returned value $t^*$ of the first linear program.
We claim that for at least one $i'\in N$, the linear program for $i'$ returns $\varepsilon = 0$.
Indeed, if this is not the case, then for every $i'$, there is a feasible allocation that gives $i'$ a utility strictly greater than $t^*$ and gives every other agent a utility of at least $t^*$; by taking the ``average'' of all such allocations similarly to the proof of Proposition~\ref{prop:leximin-unique}, we obtain a feasible allocation that gives every agent strictly greater than $t^*$, contradicting the definition of $t^*$.
For $i'$ such that $\varepsilon = 0$, we therefore have that the utility of $i'$ is equal to $t^*$ in every leximin allocation.
We then apply a similar argument for the remaining $n-1$ iterations to conclude that the utility of each agent in the allocation that the algorithm returns is equal to
the corresponding utility in every leximin allocation.
It therefore follows that the returned allocation is a leximin allocation.
\end{proof}

We now come to our main result of this section, which establishes the truthfulness of the leximin solution.

\begin{theorem}
\label{thm:leximin-truthful}
For arbitrary tie-breaking, the leximin solution is truthful.
\end{theorem}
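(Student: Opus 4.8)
The plan is to work in the atomic/linear-programming representation from the proof of Proposition~\ref{prop:leximin-algorithm}. First I would fix a partition of the cake into atoms fine enough that the true set $W_i$, the misreport $W_i'$, and every other agent's set $W_k$ are unions of atoms, and identify each candidate piece with the vector $(x_j)$ recording the amount of each atom it contains, subject to $0\le x_j\le\ell(I_j)$ and $\sum_j x_j\le\alpha$. The crucial bookkeeping observation is that, because the mechanism blocks agent $i$ outside her reported set, her \emph{true} utility when she reports $W_i'$ equals the total amount that the leximin allocation of the reported instance places on the atoms lying in $W_i\cap W_i'$: atoms in $W_i'\setminus W_i$ (false claims) never contribute to true utility yet do count toward $i$'s \emph{reported} utility, while atoms in $W_i\setminus W_i'$ (relinquished desire) are blocked and thus lost. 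Writing $g(W_i')$ for this true utility, Definition~\ref{def:truthful} becomes the single inequality $g(W_i')\le g(W_i)$ for every $W_i'$, and Proposition~\ref{prop:leximin-unique} guarantees that $g$ is well defined regardless of tie-breaking.

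Next I would reduce an arbitrary deviation to two elementary ones by routing it through $W_i\cup W_i'$: moving from $W_i$ to $W_i\cup W_i'$ only \emph{adds} truly-undesired atoms, and moving from $W_i\cup W_i'$ to $W_i'$ only \emph{removes} truly-desired atoms. Hence it suffices to prove two one-step immunities, each holding from an \emph{arbitrary} intermediate report $R$: (E1) adding a sliver of truly-undesired cake to $R$ does not increase $g$; and (E2) removing a sliver of truly-desired cake from $R$ does not increase $g$. Chaining (E1) along the first leg and (E2) along the second then yields $g(W_i')\le g(W_i)$. This is the sense in which immunity to restricted manipulations protects the mechanism against all manipulations.

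To establish (E1) and (E2) I would use the \epschange: rather than adding or deleting a whole sliver at once, deform the reported instance continuously towards the target and track how the leximin outcome moves agent $i$'s truly-counted amount. The workhorse is the convexity/averaging argument already used for Proposition~\ref{prop:leximin-unique}: the feasible region is a polytope and leximin is the lexicographic maximum of the sorted utility vector, so if a small deformation \emph{strictly} increased $g$, then an appropriate convex combination of the pre- and post-deformation allocations would be feasible for one of the two instances and strictly better in the leximin order, contradicting optimality (or contradicting the maximality of the successive thresholds $t^*$ in Algorithm~\ref{alg:leximin}). For (E1) the added false atom merely gives agent $i$ a cheaper way to meet her reported-utility requirement, so leximin can only shift mass off her truly-desired atoms; for (E2) deleting a truly-desired atom simultaneously relaxes $i$'s constraint and forfeits that atom's contribution, and the averaging argument shows the net truly-counted amount cannot rise.

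The main obstacle is precisely where the \epschange is needed: the map from reported instances to leximin utility vectors is only piecewise linear, with kinks exactly at the parameter values where the set of agents ``pinned'' at each utility level changes. A naive one-sided derivative can therefore misbehave at these transitions. The delicate part is to show that $g$, viewed as a function of the deformation parameter, is continuous and has no upward jump even as the active (binding) set changes, so that its nonpositive behaviour on each linear piece forces it to be globally nonincreasing. Making this active-set tracking rigorous---guaranteeing that crossing a kink never lets agent $i$'s truly-counted amount jump up---is the crux, and is presumably the role of the dedicated \epschange lemma advertised in the introduction.
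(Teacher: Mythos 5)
Your high-level architecture (reduce an arbitrary deviation to elementary ones, then dispose of the elementary ones by an averaging/$\varepsilon$-change argument) matches the paper's in spirit, but there are two genuine gaps. First, the claim that Proposition~\ref{prop:leximin-unique} makes $g$ well defined is incorrect: that proposition pins down each agent's utility \emph{with respect to the reported instance}. Under a misreport, different leximin allocations of the reported instance can give agent~$i$ different \emph{true} utilities. For example, with $W_i=[0,0.2]$, $W_j=[0.1,0.3]$, $\alpha=0.2$, and the padded report $W_i'=[0,0.2]\cup[0.5,0.6]$, every leximin allocation of the reported instance gives reported utilities $(0.15,0.15)$, but agent~$i$'s surplus $0.05$ may be placed either in $[0,0.1]$ (truly desired) or in $[0.5,0.6]$ (worthless to her), so her true utility ranges over $[0.1,0.15]$ depending on tie-breaking. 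Since the theorem must hold for arbitrary tie-breaking, $g(W_i')$ has to be defined as a maximum over all leximin allocations of the reported instance, and every monotonicity claim must be proved for all such allocations; this is not cosmetic---it is where much of the difficulty lives.

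Second, and more seriously, your decomposition through $W_i\cup W_i'$ forces you to prove (E2) starting from the union, i.e., a monotonicity statement about \emph{true} utility between two instances in \emph{both} of which the report contains cake agent~$i$ does not desire. Neither instance's leximin objective sees $W_i$, so the convex-combination trick you invoke does not close the argument: averaging an allocation that is leximin for one instance with an allocation that is leximin for the other yields no contradiction, because the two instances have different objectives (agent~$i$'s utility function differs between them), and an improvement with respect to one need not be an improvement with respect to the other. Transferring leximin improvements across instances is exactly what requires the delicate bucket-by-bucket analysis, and you explicitly leave it open (``the crux''). The paper sidesteps this hard case entirely by routing the reduction through the \emph{allocation} rather than through the union of reports: $\widehat{W}_i \to \widehat{W}_i\cap\widehat{A} \to \widehat{W}_i\cap\widehat{A}\cap W_i$. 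Lemma~\ref{lem:discard-unselected} shows the first shrink keeps $\widehat{A}$ leximin; Lemma~\ref{lem:further-discard} shows that after the second shrink the (now subset-of-truth) report is fully allocated in every leximin allocation; and since a subset-of-truth report has true utility equal to reported utility, a single reported-utility monotonicity lemma for shrinking (Lemma~\ref{lem:shrink-get-no-more}), applied against the truthful report, finishes the proof. The idea missing from your plan is precisely this: build the intermediate instances from the allocation the misreport receives, so that every comparison you ever need is between the truth and a report that is a \emph{subset} of the truth, where true and reported utilities coincide and the cross-instance transfer becomes tractable.
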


At a high level, the proof of Theorem~\ref{thm:leximin-truthful} proceeds by identifying specific types of manipulations, arguing that such manipulations cannot be beneficial when the leximin solution is used, and then showing that being immune to these manipulations implies being immune to all manipulations.
We start by defining an \epschange, a useful concept in our proof.

\begin{definition}[\epschange]
Given two vectors of real numbers $\mathbf{x} = (x_1, x_2, \dots, x_n)$ and $\mathbf{x}' = (x_1', x_2', \dots, x_n')$, an \epschange from $\mathbf{x}$ towards $\mathbf{x}'$ refers to the following continuous operation: for each $i \in \{1, 2, \dots, n\}$, $x_i$ changes linearly to $x_i'' := x_i + \varepsilon(x_i'-x_i)$, where  $\varepsilon$ is sufficiently small so that if $x_i < x_j$, then $x_i'' < x_j''$.
\end{definition}

For ease of expression, we will also use an \epschange to refer to the outcome of such an operation, i.e., the vector $\mathbf{x}''$.
When we discuss $\varepsilon$-changes, we will not specify the exact value of $\varepsilon$: any $\varepsilon$ satisfying the above condition works.
The following lemma establishes a useful property of $\varepsilon$-changes.

\begin{lemma}\label{lem:eps-change-property}
Given two vectors $\mathbf{x}$ and $\mathbf{y}$, if $\mathbf{y}$ is a better vector with respect to the leximin ordering than $\mathbf{x}$, then an \epschange from $\mathbf{x}$ to $\mathbf{y}$ is also a leximin improvement.
\end{lemma}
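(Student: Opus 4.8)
The plan is to exploit the fact that an \epschange from $\mathbf{x}$ to $\mathbf{y}$ is nothing but the convex combination $\mathbf{x}'' = (1-\varepsilon)\mathbf{x} + \varepsilon\mathbf{y}$, where $\varepsilon$ is small enough that any two coordinates strictly ordered in $\mathbf{x}$ remain strictly ordered in $\mathbf{x}''$. First I would group the coordinates into blocks $B_1,\dots,B_r$ determined by the distinct values $v_1 < \dots < v_r$ taken by $\mathbf{x}$, so that $x_i = v_s$ exactly when $i \in B_s$. The defining property of the \epschange guarantees that these blocks do not interleave after the perturbation: every transformed coordinate coming from $B_s$ stays strictly below every transformed coordinate coming from $B_t$ whenever $s < t$. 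Consequently, when we sort $\mathbf{x}''$, the first $|B_1|$ positions are exactly the sorted transformed values of $B_1$, the next $|B_2|$ positions those of $B_2$, and so on, perfectly aligned with the runs of equal values in the sorted $\mathbf{x}$.

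This alignment reduces the leximin comparison of $\mathbf{x}''$ against $\mathbf{x}$ to a sequence of within-block comparisons carried out in increasing order of $s$. Within $B_s$ the transformed value of coordinate $i$ is $v_s + \varepsilon(y_i - v_s)$, which exceeds, equals, or falls below $v_s$ according to whether $y_i$ does; hence comparing the sorted transformed block against the constant string $v_s,\dots,v_s$ amounts to asking whether $\min_{i\in B_s} y_i$ is at least $v_s$. The decisive block is the first block $B_{s^*}$ in which not all $y_i$ equal $v_{s^*}$, and the second step is to record that $\mathbf{x}''$ is a leximin improvement over $\mathbf{x}$ precisely when $\min_{i\in B_{s^*}} y_i \ge v_{s^*}$; indeed, in that case some coordinate of $B_{s^*}$ has $y_i > v_{s^*}$ by the choice of $s^*$, so the first within-block position at which the sorted transformed values depart from $v_{s^*}$ is a strict increase. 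What remains is therefore purely to verify the implication that if $\mathbf{y}$ is leximin-better than $\mathbf{x}$, then $\min_{i\in B_{s^*}} y_i \ge v_{s^*}$.

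I expect this last implication to be the crux, because coordinate-wise mixing interacts awkwardly with the sorted-order definition of leximin: a coordinate that is large in $\mathbf{x}$ may be small in $\mathbf{y}$, so one cannot read the comparison off the sorted vectors directly. I would argue the contrapositive using an elementary counting characterization of the leximin order, namely that if at the lowest threshold $t$ where the number of coordinates strictly below $t$ differs between the two vectors $\mathbf{y}$ has strictly more such coordinates, then $\mathbf{y}$ is leximin-worse than $\mathbf{x}$. Suppose some $y_i < v_{s^*}$ with $i \in B_{s^*}$. By the minimality of $s^*$, the coordinates in $B_1,\dots,B_{s^*-1}$ satisfy $y_i = v_s = x_i$, so $\mathbf{y}$ and $\mathbf{x}$ carry identical values there, while all remaining coordinates of $\mathbf{x}$ are at least $v_{s^*}$. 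Hence for every threshold $t \le v_{s^*}$ the vector $\mathbf{y}$ has at least as many coordinates below $t$ as $\mathbf{x}$ does, with strict inequality at $t = v_{s^*}$, where the offending coordinate is counted for $\mathbf{y}$ but no coordinate of $\mathbf{x}$ below $v_{s^*}$ is added. Thus the lowest threshold at which the two counts differ lies at or below $v_{s^*}$ and witnesses $\mathbf{y}$ having strictly more small coordinates, so $\mathbf{y}$ is leximin-worse, contradicting the hypothesis. This yields $\min_{i\in B_{s^*}} y_i \ge v_{s^*}$ and completes the argument.
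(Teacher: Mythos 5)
Your proposal is correct and takes essentially the same route as the paper's proof: partition the coordinates into the value-blocks (buckets) of $\mathbf{x}$, use the fact that the $\varepsilon$-change preserves the block ordering to reduce the leximin comparison to the first block where anything changes, and show that no coordinate in that block can decrease when $\mathbf{y}$ is leximin-better than $\mathbf{x}$. The only substantive difference is that you justify this last step rigorously via the threshold-counting characterization of the leximin order, a step the paper asserts with little elaboration, so your write-up is a more detailed rendering of the same argument.
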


\begin{proof}
Sort the numbers of $\mathbf{x}$ in non-descending order and group them into buckets so that numbers within each bucket are the same and those in different buckets are different.
Observe that an \epschange improves $\mathbf{x}$ with respect to the leximin ordering if and only if for the lowest bucket where there is a change, some number increases and no number decreases.

Consider an \epschange from $\mathbf{x}$ towards a better leximin vector $\mathbf{y}$.
If some number in the lowest bucket of $\mathbf{x}$ decreases, then $\mathbf{y}$ would not be a leximin improvement of $\mathbf{x}$, so no number in this bucket decreases.
If some number in this bucket increases, we are done by the above observation.
Else, there is no change in this bucket, and we move on to the next bucket and repeat the same argument.
Because $\mathbf{x}$ and $\mathbf{y}$ are different, there must be a change in at least one bucket, which gives our desired conclusion.
\end{proof}

We now extend the definition of an \epschange to allocations.
Recall that for the leximin solution, it suffices to consider the set $A$ instead of the entire allocation $\alloc$.
Given two allocations $A$ and $A'$, an \epschange from $A$ towards $A'$ can be captured by dividing the cake into intervals according to the breakpoints and changing $A$ towards $A'$ so that the length of cake included in the allocation in each interval changes linearly.
Note that when we perform an \epschange from $A$ towards $A'$, by linearity, we also obtain a corresponding \epschange from the vector $(u_1(A),\dots,u_n(A))$ towards $(u_1(A'),\dots,u_n(A'))$, and any allocation obtained during the process is feasible.

Next, we present auxiliary lemmas used for proving the truthfulness of leximin solution.
These lemmas discuss how the leximin allocation can change when an agent modifies her density function in various ways.
For notational convenience, in these lemmas we assume that instance $R$ (resp., $R'$) contains the density functions corresponding to $W_1,\dots,W_n$ (resp., $W_1',\dots,W_n'$).
Our first lemma says that whenever an agent shrinks her desired piece in such a way that it contains the entire portion she receives, then she should still receive the same portion in the new instance.

\begin{lemma}\label{lem:discard-unselected}
Given a leximin allocation $A$ for instance $R$, let $R'$ be an instance such that $A\cap W_i \subseteq W_i' \subseteq W_i$ for an agent $i \in N$ and $W_j' = W_j$ for all $j \in N \setminus \{i\}$.
Then, $A$ is also a leximin allocation for $R'$.
\end{lemma}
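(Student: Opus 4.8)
The plan is to show that shrinking agent~$i$'s desired piece to $W_i'$ leaves the utility vector of $A$ completely unchanged, and that any hypothetical leximin improvement over $A$ in the modified instance $R'$ would translate into an even stronger improvement in the original instance $R$, contradicting the fact that $A$ is leximin-optimal for $R$. First I would record the basic set identity $A\cap W_i' = A\cap W_i$: the inclusion $W_i'\subseteq W_i$ gives $A\cap W_i'\subseteq A\cap W_i$, while $A\cap W_i\subseteq W_i'$ together with $A\cap W_i\subseteq A$ gives the reverse inclusion. Hence agent~$i$'s utility for $A$ is the same under $R$ and $R'$, and since $W_j'=W_j$ for all $j\ne i$, \emph{every} agent's utility for $A$ coincides in the two instances. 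Thus the utility vector of $A$ is literally the same object in both instances; call it $\mathbf{w}$.

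Next I would argue by contradiction. Suppose $A$ is not a leximin allocation for $R'$; then there is a feasible allocation $B$ whose utility vector $\mathbf{v}$ under $R'$ is strictly leximin-better than $\mathbf{w}$. Note that feasibility means only $\ell(B)\le\alpha$, which does not depend on the reported sets, so $B$ is equally feasible in both instances. The crucial comparison is between $\mathbf{v}$ and the utility vector $\mathbf{v}'$ of the \emph{same} allocation $B$ evaluated under the original instance $R$: since $W_i'\subseteq W_i$ we have $\ell(B\cap W_i')\le \ell(B\cap W_i)$, while the remaining agents' utilities are unchanged, so $\mathbf{v}'\ge\mathbf{v}$ componentwise.

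Finally I would invoke the monotonicity of the leximin ordering under componentwise domination: raising the entries of a vector can only weakly raise each of its order statistics, so the sorted version of $\mathbf{v}'$ dominates the sorted version of $\mathbf{v}$ entry by entry, which makes $\mathbf{v}'$ weakly leximin-better than $\mathbf{v}$. Chaining this with $\mathbf{v}$ being strictly leximin-better than $\mathbf{w}$ shows that $\mathbf{v}'$ is strictly leximin-better than $\mathbf{w}$. But then $B$ would be a strictly better allocation than $A$ for the original instance $R$, contradicting that $A$ is a leximin allocation for $R$. The one step that needs care—and which I view as the main obstacle—is this last transfer: I must state and use the (standard but not wholly trivial) fact that componentwise domination of utility vectors is preserved after sorting and therefore implies the weak leximin relation, so that the improvement in $R'$ genuinely lifts back to a strict improvement in $R$. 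Everything else, namely the set identity for $A\cap W_i'$ and the instance-independence of feasibility, is routine.
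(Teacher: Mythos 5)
Your proposal is correct and follows essentially the same route as the paper's proof: assume $A$ is not leximin for $R'$, take a strictly better allocation there, observe that $A$'s utility vector is identical in $R$ and $R'$ (via $A\cap W_i\subseteq W_i'\subseteq W_i$) while the improving allocation's utilities can only weakly increase when re-evaluated under $R$, and conclude that the strict leximin improvement transfers to $R$, a contradiction. The only difference is presentational: you make explicit the monotonicity of the leximin ordering under componentwise domination, which the paper leaves implicit in its final sentence.
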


\begin{proof}
Suppose for contradiction that $A$ is not a leximin allocation for $R'$.
Consider a leximin allocation $A'$ for $R'$.
Since $W_i'\subseteq W_i$, the utility of agent~$i$ for $A'$ in $R'$ is at most that for $A'$ in $R$.
Moreover, the utility of every agent $j\ne i$ for $A'$ is the same in $R'$ as in $R$.
On the other hand, since $A\cap W_i \subseteq W_i'$, the utility of every agent for $A$ is the same in $R'$ as in $R$.
By our assumption that $A'$ is a better allocation with respect to the leximin ordering than $A$ in $R'$, the same must also hold for $R$, a contradiction.
\end{proof}

Our second lemma says that when an agent shrinks her desired piece, she should not get a higher utility than before.

\begin{lemma}\label{lem:shrink-get-no-more}
Given a leximin allocation $A$ for instance $R$, let $R'$ be an instance such that $W_i' \subseteq W_i$ for an agent $i \in N$ and $W_j' = W_j$ for all $j \in N \setminus \{i\}$.
Let $A'$ be a leximin allocation for $R'$.
Then, $\ell(A'\cap W_i') \leq \ell(A\cap W_i)$.
\end{lemma}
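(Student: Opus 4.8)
The plan is to argue by contradiction, so suppose $\ell(A'\cap W_i') > \ell(A\cap W_i)$. Write $a_k := u_k(A)$ for the utilities in $R$ and $b_k := u_k(A')$ for the utilities in $R'$, so our assumption reads $b_i > a_i$. The crucial observation is that $R$ and $R'$ differ only in agent~$i$'s valuation, and only on $W_i\setminus W_i'$: evaluating $A'$ in the original instance $R$ leaves every coordinate $k\ne i$ equal to $b_k$ while weakly raising coordinate $i$, since $W_i'\subseteq W_i$ gives $u_i(A')=\ell(A'\cap W_i)\ge\ell(A'\cap W_i')=b_i$; symmetrically, evaluating $A$ in $R'$ leaves every coordinate $k\ne i$ equal to $a_k$ while weakly lowering coordinate $i$. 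Since $A$ is leximin-optimal in $R$, $A'$ is leximin-optimal in $R'$, and the constraint $\ell(\cdot)\le\alpha$ is instance-independent, this yields two leximin comparisons: the sorted profile of $A$ in $R$ is at least as good as that of $A'$ in $R$, and the sorted profile of $A'$ in $R'$ is at least as good as that of $A$ in $R'$.

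These two comparisons alone are consistent with $b_i>a_i$ (a leximin-dominant profile can lose to another on any single coordinate), so I would bring in feasibility through averaging, exactly as in the proof of Proposition~\ref{prop:leximin-unique}. Let $A''$ be the allocation whose length in each inter-breakpoint interval is the average of those of $A$ and $A'$; it is feasible, and each agent's utility under it is the average of the two utilities, computed in whichever instance we evaluate. In $R$, leximin-optimality of $A$ forces the sorted profile of $A''$ to be no better than that of $A$; by Proposition~\ref{prop:leximin-unique}, if it ties then $A''$ is itself leximin-optimal and hence gives agent~$i$ utility $a_i$, forcing $u_i(A')=a_i$ in $R$ and contradicting $u_i(A')\ge b_i>a_i$. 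Running the same averaging inside $R'$ rules out the analogous tie there. Thus it remains to treat the case where $A''$ is strictly leximin-worse than $A$ in $R$ and strictly leximin-worse than $A'$ in $R'$.

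For this remaining case I would first use Lemma~\ref{lem:discard-unselected} to simplify the geometry: replacing $W_i$ by $W_i'\cup(A\cap W_i)$ keeps $A$ leximin-optimal and keeps agent~$i$'s utility at $a_i$, so we may assume $W_i\setminus W_i'\subseteq A$, i.e.\ the removed region consists entirely of cake that agent~$i$ actually receives under $A$. I would then try to exhibit a feasible allocation in $R$ that strictly leximin-dominates $A$. The natural device is a small \epschange (Lemma~\ref{lem:eps-change-property}) from $A$ towards $A'$: in the lowest utility bucket where anything moves, agent~$i$ strictly rises, and the only agents that can fall are those $k$ with $b_k<a_k$, which are precisely the agents $A'$ serves strictly worse than $A$. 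Leximin-optimality of $A$ prevents this change from being an improvement, so some such low agent must be present; but in $R'$ those same agents are kept afloat at the expense of agent~$i$'s surplus, so the optimality of $A'$ in $R'$ should feed back to contradict the optimality of $A$ in $R$.

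The main obstacle is exactly this last step: converting the informal reshuffle into an allocation that is provably both feasible and a strict leximin improvement. The difficulty is that whether agent~$i$'s gains can be traded back to the low-utility agents who lose depends on the overlap pattern of the desired sets, so no argument operating purely on the utility profiles of $A$ and $A'$ can close the gap (those profiles, as noted, are genuinely compatible with $b_i>a_i$). I would resolve it either by an iterated \epschange that successively restores the decreasing low-utility agents using the slack that shrinking $W_i$ creates, or---should that bookkeeping prove unwieldy---by appealing to a bottleneck (water-filling) description of the leximin solution, in which agent~$i$'s saturation level is controlled by the total length of the desired sets of the agent groups containing~$i$; shrinking $W_i$ can only tighten these group constraints and hence can only lower that level, which is precisely the desired inequality $\ell(A'\cap W_i')\le\ell(A\cap W_i)$.
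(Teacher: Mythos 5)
Your proposal has a genuine gap, and you name it yourself: everything before the final step is only a reduction (the averaging argument correctly rules out leximin ties, and the appeal to Lemma~\ref{lem:discard-unselected} is a valid normalization), but the remaining case---which is the entire content of the lemma---is left as an ``informal reshuffle'' whose feasibility and leximin-improvement properties you admit you cannot establish. Moreover, your diagnosis of the obstacle is off: you claim that no argument operating purely on the utility profiles of $A$ and $A'$ can close the gap, but the paper's proof is exactly such an argument. What it exploits, and what your sketch lacks, is not the overlap pattern of desired sets but the coordinate-wise (unsorted) direction-of-change information, combined with the bucket characterization from the proof of Lemma~\ref{lem:eps-change-property} and the fact that interpolating between two feasible allocations is always feasible.

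Concretely, write $x=\ell(A\cap W_i)$ and $x'=\ell(A'\cap W_i')>x$. As you note, Proposition~\ref{prop:leximin-unique} gives that $A$ is strictly leximin-better than $A'$ in $R$. Since lowering a single coordinate can only hurt a vector in the leximin order, the vector $(u_1(A'),\dots,x',\dots,u_n(A'))$---which is precisely the utility vector of $A'$ evaluated in $R'$---is still strictly leximin-worse than the utility vector of $A$ in $R$. Now compare the move from this vector to the vector of $A$ in $R$ (coordinate $i$ drops from $x'$ to $x$) with the move from $A'$ to $A$ evaluated in $R'$ (coordinate $i$ drops from $x'$ to $\ell(A\cap W_i')\le x$, every other coordinate changes identically): both start from the same vector and move every coordinate in the same direction. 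Hence the lowest bucket with a change, and the pattern of increases and decreases within it, are identical for the two moves, so by the characterization in the proof of Lemma~\ref{lem:eps-change-property}, an \epschange from $A'$ towards $A$ is a strict leximin improvement with respect to $R'$; it is feasible, contradicting the leximin optimality of $A'$ in $R'$. Note the direction: the paper perturbs $A'$ towards $A$ \emph{inside $R'$}, whereas you perturb $A$ towards $A'$ inside $R$ and then must ``trade back'' agent~$i$'s gains to the agents who fall---exactly the bookkeeping you could not close, and which this argument never needs. The water-filling characterization you offer as a fallback is likewise undeveloped and unnecessary.
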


\begin{proof}
We now proceed to prove Lemma~\ref{lem:shrink-get-no-more}.
Suppose for contradiction that $\ell(A'\cap W_i') > \ell(A\cap W_i)$; let $x'$ and $x$ denote the former and latter quantites, respectively.
By Proposition~\ref{prop:leximin-unique}, agent~$i$ receives the same utility in all leximin allocations, so $A$ is a better allocation with respect to the leximin ordering than $A'$ in $R$.
When changing from $A'$ to $A$, since $W_i'\subseteq W_i$, the utility of agent~$i$ decreases from at least $x'$ to $x$ with respect to $R$.
Hence, even if the utility of agent~$i$ started at exactly $x'$ and decreased to $x$, the change would still be a leximin improvement.

Now, consider the agents' utilities with respect to $R'$.
When changing from $A'$ to $A$, since $W_i'\subseteq W_i$, the utility of agent~$i$ decreases from  $x'$ to at most $x$.
Since all changes are in the same direction as the previous change starting from $x'$, which is a leximin improvement, by Lemma~\ref{lem:eps-change-property}, an \epschange from $A'$ towards $A$ is also a leximin improvement with respect to $R'$.
This contradicts the assumption that $A'$ is a leximin allocation for $R'$.
\end{proof}

Our third lemma says that if an agent is already getting her entire desired piece, then whenever she shrinks her desired piece, she should still be at maximum utility.

\begin{lemma}\label{lem:further-discard}
Given a leximin allocation $A$ for instance $R$ with $W_i\subseteq A$ for an agent $i\in N$, let $R'$ be an instance such that $W_i' \subseteq W_i$ and $W_j' = W_j$ for all $j \in N \setminus \{i\}$.
Let $A'$ be a leximin allocation for $R'$.
Then, $W_i'\subseteq A'$.
\end{lemma}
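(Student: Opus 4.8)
The plan is to argue by contradiction, mirroring the proof of Lemma~\ref{lem:shrink-get-no-more} but running the \epschange in the opposite direction, and to reduce the claim to an \epschange that is simultaneously a leximin improvement for the shrunk instance $R'$ and for the original instance $R$. Suppose $W_i'\not\subseteq A'$, so that $\ell(A'\cap W_i')<\ell(W_i')$. The first observation is that, because $W_i\subseteq A$ and $W_i'\subseteq W_i$, the allocation $A$ gives agent~$i$ her full utility under both reports: $\ell(A\cap W_i)=\ell(W_i)$ in $R$ and $\ell(A\cap W_i')=\ell(W_i')$ in $R'$. In particular, under $A$ agent~$i$ would obtain strictly more in $R'$ than she does under $A'$.

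Next I would pin down the leximin comparison between $A$ and $A'$ in $R'$. Since $A'$ is a leximin allocation and $A$ is feasible, $A'$ is leximin-weakly-better than $A$ in $R'$; and it cannot be a tie, for otherwise $A$ would also be leximin for $R'$ and Proposition~\ref{prop:leximin-unique} would force $\ell(A'\cap W_i')=\ell(A\cap W_i')=\ell(W_i')$, contradicting our assumption. Hence $A'$ is strictly leximin-better than $A$ in $R'$, and Lemma~\ref{lem:eps-change-property} shows that an \epschange from $A$ towards $A'$ is a leximin improvement in $R'$ (each intermediate allocation being feasible).

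The heart of the proof is to transfer this improvement from $R'$ to $R$. Going from $A$ towards $A'$, agent~$i$'s utility decreases in both instances---from $\ell(W_i')$ to $\ell(A'\cap W_i')$ in $R'$ and from $\ell(W_i)$ to $\ell(A'\cap W_i)\le\ell(W_i)$ in $R$---while every other agent's utility is identical in $R$ and $R'$. The key structural fact I would use is the bucket characterization behind Lemma~\ref{lem:eps-change-property}: in a leximin improvement, no coordinate in the lowest bucket that changes may decrease. Because agent~$i$ decreases, she cannot lie in that lowest changing bucket in $R'$; thus her $R'$-value $\ell(W_i')$ strictly exceeds the smallest value $w^*$ at which some \emph{other} agent changes (such an agent must exist, since agent~$i$ decreasing alone could never constitute an improvement), and the improvement in $R'$ is certified entirely by the other agents at level $w^*$. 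Since $\ell(W_i)\ge\ell(W_i')>w^*$, agent~$i$ again sits strictly above $w^*$ in $R$; as the other agents and their changes are unchanged, the lowest changing bucket and the ``some increase, no decrease'' condition are inherited verbatim. Therefore the same \epschange is a leximin improvement in $R$ as well, contradicting the fact that $A$ is a leximin allocation for $R$, and so $W_i'\subseteq A'$.

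I expect the transfer step to be the main obstacle, and its success hinges on choosing the \epschange to run from $A$ towards $A'$ (so that agent~$i$ is decreasing) rather than the reverse. An increasing agent could occupy or fall below the lowest changing bucket, and relocating her value between $R'$ and $R$ might then shift which bucket is critical and expose a decrease that was previously harmless; the decreasing direction avoids this because raising the value of a coordinate that already lies safely above the critical bucket cannot disturb it.
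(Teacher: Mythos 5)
Your proof is correct, and it uses the same toolkit as the paper---Proposition~\ref{prop:leximin-unique}, Lemma~\ref{lem:eps-change-property}, and the bucket characterization of \epschange improvements---but runs the argument in the mirror-image direction. The paper performs the \epschange from $A'$ towards $A$: it first uses Proposition~\ref{prop:leximin-unique} to conclude that $A$ is strictly leximin-better than $A'$ in the \emph{original} instance $R$, so the change (along which agent~$i$'s utility \emph{increases}) is an improvement with respect to $R$, and then transfers this improvement to $R'$ to contradict the optimality of $A'$. You instead run the change from $A$ towards $A'$, establish the strict improvement in $R'$, and transfer it to $R$ to contradict the optimality of $A$. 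The comparison is instructive: your direction makes the transfer step essentially automatic, since the decreasing agent~$i$ sits strictly above the critical bucket $w^*$ in $R'$, and raising her starting value to $\ell(W_i)\ge\ell(W_i')>w^*$ leaves the lowest changing bucket and its ``some increase, no decrease'' certificate literally untouched; the paper's direction instead requires the small case analysis (compressed into ``one can check'') that lowering the starting value of an \emph{increasing} coordinate cannot break the certificate. However, your closing claim---that success ``hinges on'' the decreasing direction because the reverse could ``expose a decrease that was previously harmless''---is not right: in the reverse direction the relocated coordinate is increasing, so even if it falls into or below the old critical bucket it can only add an increase to the new lowest changing bucket, never a decrease, and indeed the paper's own proof takes exactly that route. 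This does not affect the validity of your argument; it only means the direction choice is a matter of convenience rather than necessity.
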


\begin{proof}
Suppose for contradiction that $A'$ does not contain the entire $W_i'$, and let $x' = \ell(A'\cap W_i') < \ell(W_i')$.
By Proposition~\ref{prop:leximin-unique}, agent~$i$ receives the same utility in all leximin allocations, so $A$ is a better allocation with respect to the leximin ordering than $A'$ in $R$.
By Lemma~\ref{lem:eps-change-property}, an \epschange from $A'$ towards $A$ is also a leximin improvement with respect to $R$, so by the characterization of \epschange improvements in the proof of the lemma, in the lowest bucket where there is a change, some number increases and no number decreases.
In this \epschange, the utility of agent~$i$ increases from at least $x'$ towards $\ell(W_i)$.

Now, consider the same \epschange from $A'$ towards $A$, but with respect to $R'$.
The utility of agent~$i$ increases from exactly $x'$ towards $\ell(W_i')$, while those of other agents change in the same way as before.
Since agent~$i$'s utility starts no higher than before and still increases, one can check that in the lowest bucket where there is a change, again some number increases and no number decreases.
Hence, the characterization of \epschange improvements implies that the change is also a leximin improvement with respect to $R'$.
This contradicts the assumption that $A'$ is a leximin allocation for $R'$.
\end{proof}

We are now ready to prove Theorem~\ref{thm:leximin-truthful}.

\begin{proof}[Proof of Theorem~\ref{thm:leximin-truthful}]
Suppose for contradiction that the leximin solution is not truthful.
This means that there exists an instance $R$ with leximin allocation $A$ such that if agent~$i$ reports $\widehat{W}_i$ instead of $W_i$, a leximin allocation $\widehat{A}$ in the new instance $\widehat{R}$ satisfies $\ell(\widehat{A}\cap \widehat{W}_i\cap W_i) > \ell(A\cap W_i)$.
We will keep the desired pieces $W_j$ of agents $j\in N\setminus\{i\}$ unchanged throughout this proof.

First, consider an instance $\widehat{R}'$ where $\widehat{W}_i' = \widehat{W}_i\cap \widehat{A}$.
By Lemma~\ref{lem:discard-unselected} applied to $\widehat{R}$ and $\widehat{R}'$, $\widehat{A}$ is also a leximin allocation for $\widehat{R}'$.
Next, consider an instance $\widehat{R}''$ in which $\widehat{W}_i'' = \widehat{W}_i\cap \widehat{A}\cap W_i$.
Since $\widehat{W}_i'\subseteq \widehat{A}$, by Lemma~\ref{lem:further-discard} applied to $\widehat{R}'$ and $\widehat{R}''$, any leximin allocation for $\widehat{R}''$ must contain the entire $\widehat{W}_i''$.
Recall that $\ell(\widehat{W}_i'') > \ell(A\cap W_i)$.

Finally, consider the instances $R$ and $\widehat{R}''$.
From the former to the latter, agent~$i$'s desired piece shrinks from $W_i$ to $\widehat{W}_i''\subseteq W_i$.
By Lemma~\ref{lem:shrink-get-no-more}, the agent should not get a higher utility through this shrinking.
However, the agent's utility is $\ell(A\cap W_i)$ before the shrinking, and $\ell(\widehat{W}_i'')$ afterwards.
This is a contradiction.
\end{proof}

Observe that unlike MNW, the leximin solution depends on the normalization of the agents' utilities.
Besides our normalization, another common choice in cake cutting is to normalize the utility of every agent for the whole cake to $1$---this leads to an alternative definition of the leximin solution.
We remark here that this variant of leximin is \emph{not} truthful.
To see this, consider two agents with $W_1 = [0, 1/3]$ and $W_2 = [1/3, 2/3]$, and let $\alpha = 1/3$.
In this instance, the (alternative) leximin solution gives each agent length $1/6$ of the cake.
However, if agent~$2$ misreports that $W_2 = [1/3, 1]$, then it is possible that the agent receives the interval $[1/3, 5/9]$ and therefore length $2/9 > 1/6$ of her valued cake.
Let us emphasize that once we fix a mechanism, whether the mechanism is truthful or not is independent of the normalization, because truthfulness does not depend on the normalization.
Hence, our normalization for leximin is the ‘correct’ one to use if we desire truthfulness.
In the next section, we provide further evidence that this normalization is appropriate by showing that our version of the leximin solution achieves a strong fairness guarantee in terms of egalitarian welfare.

\section{Egalitarian Ratio}
\label{sec:egalitarian}

As we mentioned in the introduction, truthfulness by itself is easy to achieve, for example by always allocating a fixed piece of cake of length $\alpha$.
However, this may leave certain agents with zero utility, a patently unfair outcome.
To measure fairness, we adapt the standard notion of proportionality from cake cutting\footnote{Recall that in cake cutting, proportionality stipulates that every agent receives at least $1/n$ of her value for the entire cake.} and consider the \emph{minimum} among the utilities of all agents.
In order to perform meaningful interpersonal comparisons of utilities, we compare the utilities that the agents receive to their utilities for the entire cake in the following definition.

\begin{definition}[Egalitarian ratio]\label{def:egalitarian-ratio}
Given an instance $R$ and an allocation $\mathcal{A}$, the \emph{egalitarian ratio} of $\mathcal{A}$ is defined as
\[
\text{Egal-ratio-alloc}_R(\mathcal{A}) = \min_{i\in N}\frac{u_i(A_i)}{u_i([0,1])}.
\]
For a mechanism $\mech$ and parameters $n$ and $\alpha$, the \emph{egalitarian ratio} of $\mech$ with respect to $n$ and $\alpha$ is defined as
\[
\text{Egal-ratio}_{n,\alpha}(\mech) = \inf_{R} \text{Egal-ratio-alloc}_R(\mech(R)),
\]
where the infimum is taken over all instances with $n$ agents and parameter~$\alpha$.
\end{definition}

In other words, the egalitarian ratio of $\mech$ with respect to $n$ and $\alpha$ is the smallest ratio between an agent's utility for her piece allocated by $\mech$ and her utility for the entire cake, taken over all instances with parameters $n$ and $\alpha$.
For example, if a mechanism always allocates a fixed piece of length $\alpha$ regardless of the agents' utility functions, then its egalitarian ratio with respect to any $n$ and $\alpha\in(0,1)$ is $0$.
We first present a tight upper bound on the egalitarian ratio.

\begin{proposition}
\label{prop:egal-ratio-upper}
For all $n\ge 1$ and $\alpha\in(0,1)$,
\[
0\le \emph{Egal-ratio}_{n,\alpha}(\mech) \le \alpha
\]
for any mechanism $\mech$.
Moreover, for each inequality, there exists a mechanism $\mech$ such that the inequality is tight.
\end{proposition}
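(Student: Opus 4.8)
The plan is to treat the two inequalities and their tightness as four short, essentially independent pieces; the content lies entirely in choosing the right instance or mechanism, and the verifications are direct.

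For the lower bound $0\le\egalratio$, I would simply observe that utilities are non-negative and that $u_i([0,1])=\ell(W_i)>0$ for every $i$ by assumption, so each ratio $u_i(A_i)/u_i([0,1])$ is non-negative; hence the minimum over agents is non-negative for every allocation, and so is the infimum over instances. For the upper bound $\egalratio\le\alpha$, I would exhibit a single instance on which no allocation can exceed $\alpha$: take $W_i=[0,1]$ for every agent, so that $u_i([0,1])=1$. Any mechanism returns a selected piece $A$ with $\ell(A)\le\alpha$ and $A_i\subseteq A$, whence $u_i(A_i)=\ell(A_i)\le\ell(A)\le\alpha$. Thus every agent's ratio, and therefore the egalitarian ratio of this allocation, is at most $\alpha$; since the infimum defining $\egalratio$ ranges over all instances, it is at most $\alpha$.

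Tightness of the lower bound is witnessed by the fixed-piece mechanism $\mech$ that ignores the reports and always outputs $A=[0,\alpha]$ with $A_i=A\cap W_i$. Since $\alpha<1$, I can use an instance containing an agent $i$ with $W_i=[\alpha,1]$; this agent receives $A_i=[0,\alpha]\cap[\alpha,1]$, a single point of length $0$, so her ratio is $0$. Hence the egalitarian ratio of this mechanism is $0$, matching the lower bound for every $n\ge 1$ and $\alpha\in(0,1)$.

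The main construction is tightness of the upper bound, for which I would design a mechanism guaranteeing every agent a ratio of exactly $\alpha$ on \emph{every} instance. The idea is to ``scale the whole cake down by the factor $\alpha$'': partition $[0,1]$ into the maximal intervals delimited by consecutive breakpoints, and within each such interval $[p,q]$ select its left $\alpha$-fraction $[p,\,p+\alpha(q-p)]$; let $A$ be the union of these pieces and set $A_i=A\cap W_i$. Then $\ell(A)=\alpha\cdot\ell([0,1])=\alpha$, so the allocation is feasible. The key observation is that, because every endpoint of every interval of every $W_i$ is a breakpoint, each $W_i$ is (up to a measure-zero set) a union of these maximal intervals, so $A$ captures exactly the $\alpha$-fraction of each of them. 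Consequently $u_i(A_i)=\ell(A\cap W_i)=\alpha\,\ell(W_i)=\alpha\,u_i([0,1])$ for every $i$, so the egalitarian ratio of each produced allocation is exactly $\alpha$, and therefore $\egalratio=\alpha$ for this mechanism.

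I expect the only step requiring a moment's care to be this last verification that $W_i$ decomposes cleanly into inter-breakpoint intervals, which is immediate from the definition of breakpoints given in the preliminaries; the remaining three parts are non-negativity, a one-line computation, and a disjointness observation.
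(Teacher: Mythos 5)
Your proposal is correct and follows essentially the same route as the paper: the same trivial non-negativity argument, the same witness instance with an agent valuing the whole cake for the upper bound, the fixed-piece mechanism for tightness of $0$, and the same breakpoint-partition mechanism selecting an $\alpha$-fraction of each inter-breakpoint interval for tightness of $\alpha$. The only cosmetic differences are that you set $W_i=[0,1]$ for all agents where the paper needs only one such agent, and you spell out a concrete witness instance ($W_i=[\alpha,1]$) for the fixed-piece mechanism; neither changes the argument.
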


\begin{proof}
The lower bound of $0$ holds trivially, and is achieved by the mechanism discussed before the proposition.

For the upper bound, note that if some agent $i$ values the whole cake (i.e., $W_i = [0, 1]$), then $u_i([0,1]) = 1$ and $u_i(A_i) \le \alpha$, so no mechanism can achieve egalitarian ratio larger than $\alpha$.
The tightness follows from a mechanism that, given any instance, divides the cake into intervals using all breakpoints and chooses an (arbitrary) $\alpha$ fraction from each interval---this results in $u_i(A_i)=\alpha\cdot u_i([0,1])$ for all $i$.
\end{proof}

Our next result gives the precise egalitarian ratio of the leximin solution.

\begin{theorem}
\label{thm:leximin-ratio}
For all $n\ge 1$ and
$\alpha\in(0,1)$,
\[
\emph{Egal-ratio}_{n,\alpha}(\emph{leximin}) = \frac{\alpha}{n-(n-1)\alpha}.
\]
\end{theorem}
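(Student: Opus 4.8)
The plan is to prove the two matching bounds separately: an explicit instance showing leximin cannot exceed the ratio, and a per-agent guarantee showing it never falls below. Write $r=\frac{\alpha}{n-(n-1)\alpha}$. For the upper bound I would exhibit a \emph{disjoint} instance: let $n-1$ agents have pairwise disjoint desired intervals of length $\alpha/n$ each, and let the remaining agent have a disjoint interval of length $1-(n-1)\alpha/n$ (the lengths sum to $1$). Because all pieces are disjoint, leximin reduces to water-filling a budget $\alpha$ against the capacities $\ell(W_i)$: raising a common utility level to $\alpha/n$ exactly exhausts the budget while capping each small agent, so every agent receives utility $\alpha/n$, and by Proposition~\ref{prop:leximin-unique} this profile is forced. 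The large agent then has ratio $\frac{\alpha/n}{1-(n-1)\alpha/n}=r$ and the others have ratio $1$, witnessing that the egalitarian ratio is at most $r$.

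For the matching lower bound I would show that in \emph{any} leximin allocation $A$, every agent $k$ already satisfies $u_k(A)\ge r\cdot\ell(W_k)$. First reduce to $\ell(A)=\alpha$: if any desired cake remains unallocated while budget remains, adding it strictly improves the leximin vector, so either $\ell(A)=\alpha$ or every agent is fully served (ratio $1$, done). Now fix $k$ with $u_k<\ell(W_k)$ and let $Q=A\setminus W_k$ be the allocated cake outside $W_k$, of length $\alpha-u_k$. The key exchange claim is that every point of $Q$ is desired by some agent $j$ with $u_j\le u_k$. Otherwise a positive-measure part of $Q$ is desired only by agents with utility strictly above $u_k$; moving a tiny amount of it into the unallocated part of $W_k$ raises $u_k$ without decreasing the utility of any agent whose utility is at most $u_k$, so in the lowest bucket where a change occurs some coordinate strictly increases and none decreases. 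By the characterization of \epschange improvements in the proof of Lemma~\ref{lem:eps-change-property}, this is a leximin improvement, contradicting the optimality of $A$.

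From the claim I would extract two bounds on $q:=\ell(Q)$. Setting $P=\{j\ne k:u_j\le u_k\}$, every point of $Q$ lies in some $W_j$ with $j\in P$, so $q\le\sum_{j\in P}\ell(Q\cap W_j)\le\sum_{j\in P}u_j\le(n-1)u_k$; since $u_k=\alpha-q$, this gives $u_k\ge\alpha/n$. Separately, $Q$ is disjoint from $W_k$ inside $[0,1]$, so $q\le 1-\ell(W_k)$, giving $u_k\ge\alpha-1+\ell(W_k)$. Writing $\theta=1-(n-1)\alpha/n$, the first bound yields $u_k\ge\alpha/n\ge r\,\ell(W_k)$ whenever $\ell(W_k)\le\theta$, while a short computation using the identity $\frac{1-\alpha}{1-r}=\theta$ shows the second bound yields $u_k\ge r\,\ell(W_k)$ whenever $\ell(W_k)\ge\theta$. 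Since the two ranges overlap at $\theta$, the per-agent guarantee holds for all $k$, and taking the minimum over agents gives the lower bound $r$.

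I expect the main obstacle to be the exchange claim: one must argue rigorously that an infinitesimal reallocation respects the leximin ordering, which is exactly where the \epschange machinery earns its keep. The secondary subtlety is that neither counting bound alone suffices; it is the \emph{interplay} of the count over the at most $n-1$ competing agents and the geometric length constraint $1-\ell(W_k)$—both simultaneously tight in the disjoint instance—that produces the exact constant, and checking that they cross precisely at $\theta$ is where the specific form $\frac{\alpha}{n-(n-1)\alpha}$ emerges.
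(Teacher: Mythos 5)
Your proof is correct, and its skeleton matches the paper's: your upper-bound witness is exactly the paper's instance ($n-1$ agents with pairwise disjoint pieces of length $\alpha/n$, one agent with the remaining $1-(n-1)\alpha/n$), and your lower bound combines the same two per-agent estimates, $u_k\ge\alpha/n$ and $u_k\ge\alpha-(1-\ell(W_k))$, split at the same threshold $\theta=1-(n-1)\alpha/n$ (the paper phrases the second regime as monotonicity in $x=\ell(W_k)-\theta$, which is the same computation as your identity $\frac{1-\alpha}{1-r}=\theta$). The genuine difference lies in how the $\alpha/n$ estimate is justified. The paper observes that a feasible benchmark allocation gives every agent $\min\{\alpha/n,\ell(W_i)\}$ (take that much of each $W_i$; the lengths sum to at most $\alpha$) and then asserts that every leximin allocation must give each agent at least this much. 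That ``hence'' conceals an argument: leximin optimality does not by itself yield per-agent domination of an arbitrary feasible benchmark, and closing the gap requires precisely the kind of $\varepsilon$-change reasoning you deploy. Your exchange claim---every point of $A\setminus W_k$ is desired by some agent $j$ with $u_j\le u_k$, since otherwise a tiny swap into $W_k\setminus A$ is a leximin improvement by the characterization in the proof of Lemma~\ref{lem:eps-change-property}---followed by the counting bound $\alpha-u_k\le(n-1)u_k$ is a self-contained substitute for that step. So the decomposition and the constants are the same, but your derivation of the key per-agent guarantee is different and more complete where the paper is terse; the paper's benchmark argument is shorter but leaves the per-agent domination claim implicit.
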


\begin{proof}
For the upper bound, consider the instance $R$ with $W_i = [(i-1)\alpha/n,i\alpha/n]$ for $i=1,\dots,n-1$, and $W_n = [(n-1)\alpha/n,1]$.
Every leximin allocation $A$ gives a desired cake of length $\alpha/n$ to every agent, so
\begin{align*}
\text{Egal-ratio-alloc}_R(A)
&\le \frac{u_n(A)}{u_n([0,1])} = \frac{\alpha/n}{1-\frac{(n-1)\alpha}{n}}= \frac{\alpha}{n-(n-1)\alpha}.
\end{align*}

We now prove the lower bound.
Since the mechanism can allocate length $\alpha$ of the cake and there are $n$ agents, it can give every agent~$i$ a utility of at least $\min\{\alpha/n, \ell(W_i)\}$.
Hence, all leximin allocations give each agent~$i$ at least this much utility.
If an agent has $\ell(W_i)\le 1 - (n-1)\alpha/n$, the utility ratio for this agent is at least $\frac{\alpha/n}{1-(n-1)\alpha/n} = \frac{\alpha}{n-(n-1)\alpha}$.
Else, suppose that  $\ell(W_i) = 1 - (n-1)\alpha/n + x$ for some $x > 0$.
In this case, no matter how the mechanism allocates length $\alpha$ of the cake, the utility of this agent is at least
\[
\alpha - \left(1 - \left(1 - \frac{(n-1)\alpha}{n} + x\right)\right) = \frac{\alpha}{n} + x.
\]
Hence, the utility ratio of this agent is at least
\[
\frac{\alpha/n + x}{1 - (n-1)\alpha/n + x} \ge \frac{\alpha}{n-(n-1)\alpha},
\]
where the inequality follows from the fact that the expression on the left-hand side is non-decreasing for $x\in[0,\infty)$.
\end{proof}

Theorem~\ref{thm:leximin-ratio} shows that the leximin solution achieves a non-trivial egalitarian ratio.
However, it is unclear how good this ratio is compared to that of other truthful mechanisms.
We will therefore show that the solution attains the highest possible ratio among all truthful mechanisms satisfying a natural condition.
Given a vector of piecewise uniform density functions $\mathbf{f} = (f_1,\dots,f_n)$, let $L_\mathbf{f}$ be a vector with $2^n$ components such that each component represents a distinct subset of agents and the value of the component is the length of the piece desired by exactly that subset of agents (and not by any agent outside the subset).

\begin{example}
Consider the instance in Example~\ref{ex:illustration}.
The corresponding $L_\mathbf{f}$ of this instance is $(1/8, 1/4, 3/8, 1/4)$,
where the components correspond to the lengths of the pieces desired by exactly the set of agents $\emptyset$, $\{1\}$, $\{2\}$, and $\{1, 2\}$, respectively.
\end{example}

\begin{definition}[Position obliviousness]
\label{def:position-oblivious}
A mechanism $\mech$ is \emph{position oblivious} if the following holds:

Let $\mathbf{f}$ and $\mathbf{f}'$ be any vectors of density functions such that $L_\mathbf{f} = L_{\mathbf{f}'}$, and let $R$ and $R'$ be instances represented by these respective vectors and a given parameter $\alpha$.
If $\mech(R) = (A,A_1,\dots,A_n)$ and $\mech(R') = (A',A_1',\dots,A_n')$, then $u_i(A_i) = u_i'(A_i')$ for every $i\in N$.
\end{definition}

Position obliviousness has previously been studied by \citet{BeiHuSu20}.
Intuitively, for a position oblivious mechanism, the utility of an agent depends only on the lengths of the pieces desired by various subsets of agents and not on the positions of these pieces.
It follows directly from the definition that the leximin solution is position oblivious.\footnote{\citet{BeiChHu17} considered a slightly stronger version of position obliviousness, which the leximin solution also satisfies.}

\begin{theorem}
\label{thm:impossibility}
Let $\mech$ be a truthful and position oblivious mechanism.
Then, for all $n\ge 1$ and $\alpha\in(0,1)$,
\[
\emph{Egal-ratio}_{n,\alpha}(\mech) \le \frac{\alpha}{n-(n-1)\alpha}.
\]
\end{theorem}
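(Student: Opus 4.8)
The plan is to prove the bound for an arbitrary truthful and position oblivious mechanism $\mech$ by a contrapositive-flavored argument: writing $r=\egalratio$, I would exhibit an instance on which the worst-off utility ratio is at most $\frac{\alpha}{n-(n-1)\alpha}$. The natural starting point is the extremal instance behind the upper bound in Theorem~\ref{thm:leximin-ratio}: agents $1,\dots,n-1$ own disjoint pieces of length $\alpha/n$ and agent~$n$ owns the remaining piece of length $1-(n-1)\alpha/n$, so the $n$ desired pieces partition $[0,1]$. Writing $u_i$ for the utility $\mech$ gives agent~$i$, Definition~\ref{def:egalitarian-ratio} supplies the per-agent lower bounds $u_i\ge r\,\ell(W_i)$, while disjointness gives the budget constraint $\sum_{i\in N}u_i\le\alpha$. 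As these two facts together yield only $r\le\alpha$, the entire strength of the argument must come from combining truthfulness with position obliviousness; the target is to upgrade this to $u_n\le\alpha/n$, since then $u_n/\ell(W_n)\le\frac{\alpha/n}{1-(n-1)\alpha/n}=\frac{\alpha}{n-(n-1)\alpha}$.

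The key device I would use is a \emph{probing} manipulation that rules out over-allocation to the large agent. The enabling observation is that, by Definition~\ref{def:position-oblivious}, an agent's utility depends only on the length profile $L_{\mathbf f}$; in particular, if a probing agent's honest piece is a subinterval of a fixed length lying \emph{inside} the large agent's region, her utility is the same wherever inside that region it sits, because all such instances share the same $L_{\mathbf f}$. I would then let this probing agent misreport the large agent's \emph{entire} region, which is a superset of her honest piece: truthfulness (Definition~\ref{def:truthful}) bounds the mass that $\mech$'s allocation places in \emph{every} subinterval of the large region by the probing agent's honest utility. Averaging these inequalities over all positions of the probe---legitimate precisely because position obliviousness freezes the honest utility across positions---bounds the total allocated mass inside the large region in terms of the probe length and the honest utility. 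Feeding this density bound back, together with the per-agent lower bounds and the budget constraint, should force the worst ratio down to $\frac{\alpha}{n-(n-1)\alpha}$; and since the egalitarian ratio is an infimum, it suffices to drive the ratio below $\frac{\alpha}{n-(n-1)\alpha}+\varepsilon$ for every $\varepsilon>0$.

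I expect the difficulty to be twofold. First, position obliviousness is strictly weaker than anonymity, so I cannot assume that symmetric agents receive equal utility; every comparison the proof invokes must therefore be either between two instances with \emph{identical} length profiles or a genuine single-agent misreport to which Definition~\ref{def:truthful} applies, and the witness instance (large agent plus an interior probing agent) must be linked back to the partition instance through such moves. Second, and more seriously, the covering/averaging of the probing inequalities is delicate: taken with vanishing probe length it reproduces only the weak bound $r\le\alpha$, and it suffers boundary losses near the ends of the large region. The real work is to tune the probe length (plausibly to scale $\alpha/n$) so that the resulting density bound, combined with the budget and the lower bounds $u_i\ge r\,\ell(W_i)$, extracts the exact constant rather than a weaker one, and to verify that the boundary effects degrade only into the harmless $\varepsilon$-slack permitted by the infimum. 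Getting this combination tight is where I would concentrate the effort.
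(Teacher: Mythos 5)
Your outline actually matches the paper's proof in its key structural choices: the paper, too, has the manipulating agent's true desired piece be a small piece lying \emph{inside} the large region while her beneficial misreport is that entire region, and it, too, uses position obliviousness to pin down her honest utility independently of where that small piece sits. The gap is in the quantitative engine. Truthfulness gives you only that the probe's honest utility, call it $c_p$, is at least the mass of the large-report allocation that the probe captures; this is worthless without a small \emph{upper} bound on $c_p$, and the sources you propose for it---the per-agent guarantees $u_i\ge r\,\ell(W_i)$ and the budget $\sum_i u_i\le\alpha$---provably cannot deliver the tight constant. In the probe instance they yield only $c_p\le\alpha\,(n-(n-1)r)/n$; feeding this into your covering bound $u_n\le\lceil\ell(W_n)/p\rceil\,c_p$ and asking for a contradiction with $u_n\ge r\,\ell(W_n)$ for every $r>\frac{\alpha}{n-(n-1)\alpha}$ forces the probe length to satisfy $p\ge n(1-\alpha)+\alpha/n$, which exceeds $\ell(W_n)=1-\alpha+\alpha/n$ whenever $n\ge 2$; and at $p=\ell(W_n)$ the probe \emph{is} the large region, the misreport is vacuous, and the computation collapses to exactly the trivial bound $r\le\alpha$ you started from. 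The missing idea is the paper's pigeonhole instance: let all $n$ agents desire pairwise disjoint pieces of \emph{equal} length $\alpha/n+\varepsilon$; the budget forces some agent to receive at most $\alpha/n$ there, and because all pieces have the same length one may relabel so that this disadvantaged agent is the one around whom the large region and the probe are subsequently built. Position obliviousness then transports the bound $c_p\le\alpha/n$ to the probe placement. This is also what resolves, rather than merely flags, your anonymity concern: the disadvantaged agent is identified \emph{first}, in an instance whose length profile is invariant under relabeling.

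Even with that bound in hand, your averaging step still fails: covering the large region takes roughly $\ell(W_n)/p$ probes, so it yields $u_n\lesssim(\ell(W_n)/p)\cdot(\alpha/n)\approx\ell(W_n)$ when $p\approx\alpha/n$---a ratio bound of about $1$ (for $n=2$, $\alpha=1/2$: three probes, each bounded by $1/4$, give only $u_n\le 3/4$). The paper avoids averaging entirely by placing a \emph{single} probe adaptively: if the mechanism's egalitarian ratio were $\frac{\alpha}{n-(n-1)\alpha}+\delta$, the large report would receive a piece $Y$ with $\ell(Y)>\alpha/n$ (for $\varepsilon$ small relative to $\delta$), so some subset $P$ of the large region of length $\alpha/n+\varepsilon$ satisfies $\ell(Y\cap P)>\alpha/n$, and this one placement already contradicts $c_p\le\alpha/n$ via truthfulness. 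Two details in your write-up would block even this fix: the probe must be a finite \emph{union} of intervals chosen to track $Y$ (your sliding \emph{subinterval} can be defeated by a mechanism that spreads $Y$ thinly, since an interval of length $\alpha/n+\varepsilon$ then captures only about $(\alpha/n+\varepsilon)\,\ell(Y)/\ell(W_n)<\alpha/n$), and all pieces must have length strictly greater than $\alpha/n$---with your exact-$\alpha/n$ pieces a probe can never capture strictly more than $\alpha/n$, so no contradiction can ever materialize.
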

\begin{proof}
Assume for the sake of contradiction that there exists a truthful and position oblivious mechanism $\mech$ with $\egalratio = \frac{\alpha}{n - (n-1)\alpha} + \delta$ for some $\delta > 0$.
For each $i \in N$, let $C_i$ be a piece of length $\ell(C_i) = \alpha/n + \varepsilon$ such that $C_i \cap C_j = \emptyset$ for every pair $i, j \in N$, where $\varepsilon > 0$ is such that
\[
\varepsilon < \min\left\{\frac{1-\alpha}{n},\frac{\delta (n - (n-1)\alpha)^2}{n (n-1) (\alpha + \delta (n - (n-1)\alpha))}\right\}.
\]

Consider an instance $R$ where $W_i = C_i$ for all $i \in N$.
Since $\mech$ can allocate length at most $\alpha$ of the cake, it must return an allocation  for which some agent receives utility at most $\alpha / n$.
Assume without loss of generality that $\mech$ returns an allocation $\alloc$ with $u_1(A_1) \leq \alpha / n$.

Next, consider an instance $R'$ where $W_i' = C_i$ for all $i \in N \setminus \{1\}$ and $W_1' = [0, 1] \setminus \bigcup_{i \in N \setminus \{1\}} C_i$.
(We use the notation $W_i'$ for instance $R'$ to distinguish from $W_i$ for instance $R$.)
For this instance, we have $\ell(W_1') = 1 - (n-1) \cdot (\alpha / n + \varepsilon)$.
Let $\alloc' = \mech(R')$, and let $Y = A_1' \cap W_1'$.
By the definition of egalitarian ratio, we have
$u_1(A_1') / u_1([0, 1]) \geq \egalratio$, that is,
\begin{align*}
\ell(Y)
&\geq \egalratio \cdot \ell(W_1') = \left(\frac{\alpha}{n - (n-1)\alpha} + \delta\right) \cdot \left(1 - (n-1) \cdot \left(\frac{\alpha}{n} + \varepsilon\right)\right),
\end{align*}
which is greater than $\alpha/n$ by our choice of $\varepsilon$.

Finally, consider an instance $R''$ where $W_i''=C_i$ for all $i \in N \setminus \{1\}$, while $W_1''$ is a subset of $[0, 1] \setminus \bigcup_{i \in N \setminus \{1\}} C_i$ of length $\ell(W_1'') = \alpha / n + \varepsilon$ such that $\ell(W_1''\cap Y) > \alpha/n$.
Since $\mech$ is position oblivious, by comparing instances $R''$ with $R$, agent~$1$ must also get a utility of at most $\alpha / n$ in instance $R''$.
However, if the agent reports $[0, 1] \setminus \bigcup_{i \in N \setminus \{1\}} C_i$ as in $R'$, she gets a utility of $\ell(W_1''\cap Y) > \alpha/n$.
This means that $\mech$ is not truthful and yields the desired contradiction.
\end{proof}

Comparing this ratio with highest possible ratio of $\alpha$ without the truthfulness condition (Proposition~\ref{prop:egal-ratio-upper}),\footnote{Note that the mechanism that achieves egalitarian ratio $\alpha$ in Proposition~\ref{prop:egal-ratio-upper} satisfies position obliviousness.} one can see that adding the truthfulness requirement incurs a (multiplicative) ``price'' of $n - (n-1)\alpha$ on the best egalitarian ratio. This price can be as large as $n$ when $\alpha$ is close to $0$, and decreases to $1$ as $\alpha$ approaches~$1$.

\section{Maximum Nash Welfare}
\label{sec:MNW}

In this section, we address the MNW solution.
We start by showing that like the leximin solution (Proposition~\ref{prop:leximin-unique}), the utility that each agent receives is the same in all MNW allocations; this renders the tie-breaking issue insignificant.

\begin{proposition}
\label{prop:MNW-unique}
Given any instance, for each agent~$i$, the utility that $i$ receives is the same in all MNW allocations.
\end{proposition}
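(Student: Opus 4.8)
The plan is to mirror the averaging argument from the proof of Proposition~\ref{prop:leximin-unique}, replacing the leximin ordering by the strict concavity of the logarithm (equivalently, the AM--GM inequality). First I would assume for contradiction that two MNW allocations $A$ and $A'$ give some agent different utilities. Both attain the same maximum Nash welfare, say $\prod_{i\in N} u_i(A) = \prod_{i\in N} u_i(A') = P^*$, and by the standing assumptions recalled in Section~\ref{sec:prelim} that $\alpha > 0$ and $\ell(W_i) > 0$, every factor $u_i(A)$ and $u_i(A')$ is strictly positive, so $P^* > 0$.

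Next I would form the ``average'' allocation $A''$ exactly as before: for each pair of consecutive breakpoints, the amount of cake between them included in $A''$ is the average of the corresponding amounts for $A$ and $A'$. By linearity, $A''$ is feasible (its total length is at most $\alpha$) and $u_i(A'') = \frac{1}{2}(u_i(A) + u_i(A'))$ for every $i\in N$. The heart of the argument is then AM--GM applied agentwise: for each $i$, $u_i(A'') = \frac{1}{2}(u_i(A) + u_i(A')) \ge \sqrt{u_i(A)\,u_i(A')}$, with equality if and only if $u_i(A) = u_i(A')$. Taking the product over all agents yields
\[
\prod_{i\in N} u_i(A'') \;\ge\; \prod_{i\in N}\sqrt{u_i(A)\,u_i(A')} \;=\; \sqrt{P^*\cdot P^*} \;=\; P^*.
\]
Since some agent has $u_i(A) \ne u_i(A')$, the agentwise inequality is strict for that agent, whence $\prod_{i\in N} u_i(A'') > P^*$, contradicting the maximality of $P^*$.

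The only delicate point---the analogue of the ``main obstacle''---is securing the strictness of the final product inequality. AM--GM alone gives a weak inequality on each factor, so one needs a strict inequality in at least one factor together with positivity of all the remaining factors to conclude that the product is strictly larger. Both ingredients are furnished by the standing assumptions $\alpha > 0$ and $\ell(W_i) > 0$, which guarantee $u_i > 0$ throughout; hence no extra work is required, and the argument is essentially a one-line strengthening of the leximin uniqueness proof.
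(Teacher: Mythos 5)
Your proof is correct and takes essentially the same route as the paper's: the same breakpoint-wise averaging construction of $A''$, followed by agentwise AM--GM with strictness for the agent whose utilities differ, contradicting maximality of the Nash welfare. The ``delicate point'' you flag---that strictness of the product inequality requires positivity of all factors---is handled identically in the paper, which likewise invokes the standing assumptions $\alpha>0$ and $\ell(W_i)>0$ before applying AM--GM.
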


\begin{proof}
We proceed in a similar manner as in the proof of Proposition~\ref{prop:leximin-unique}.
Assume for contradiction that two MNW allocations, $A$ and $A'$, give some agent different utilities.
Since the utility of every agent in an MNW allocation is strictly positive, we have $u_i(A), u_i(A') > 0$ for all $i\in N$.
Let $A''$ be an allocation such that for each pair of consecutive breakpoints, the amount of cake between those breakpoints included in $A''$ is the average of the corresponding amounts for $A$ and $A'$.
By linearity, $A''$ is a feasible allocation, and $u_i(A'') = \frac{1}{2}(u_i(A) + u_i(A'))$ for every $i\in N$.

Recall that by the arithmetic-geometric mean inequality, it holds that $\frac{x+y}{2}\ge\sqrt{xy}$ for all positive real numbers $x,y$, with equality if and only if $x = y$.
We therefore have
\begin{align*}
\prod_{i\in N}u_i(A'')
&= \prod_{i\in N}\left(\frac{1}{2}(u_i(A) + u_i(A'))\right) \\
&> \prod_{i\in N}\sqrt{u_i(A)\cdot u_i(A')} = \sqrt{\prod_{i\in N}u_i(A)}\cdot\sqrt{\prod_{i\in N}u_i(A')},
\end{align*}
where the inequality is strict because $u_i(A)\ne u_i(A')$ for at least one $i$.
Since $\prod_{i\in N}u_i(A) = \prod_{i\in N}u_i(A')$, this implies that $A''$ has a higher Nash welfare than both $A$ and $A'$, yielding the desired contradiction.
\end{proof}

In the case of two agents, we show that MNW and leximin are in fact equivalent.
The high-level idea is that both solutions can be obtained via the following process: First, select portions of the cake desired by both agents.
If the quota $\alpha$ has not been reached, let the agents `eat' their desired piece using the same speed, until either (i) one of the agents has no more desired cake, in which case we let the other agent continue eating, or (ii) we run out of quota.

\begin{theorem}
\label{thm:MNW-leximin-equivalence}
Consider an instance with two agents.
Any leximin allocation is an MNW allocation, and vice versa.
\end{theorem}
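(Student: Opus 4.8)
The plan is to reduce the statement to an equality of utility vectors and then to compute that common vector explicitly. By Proposition~\ref{prop:leximin-unique} and Proposition~\ref{prop:MNW-unique}, every leximin allocation induces the same utility vector $\mathbf{u}^L=(u_1^L,u_2^L)$ and every MNW allocation induces the same vector $\mathbf{u}^N=(u_1^N,u_2^N)$. Moreover, a feasible allocation $A$ (a set with $\ell(A)\le\alpha$) is a leximin allocation if and only if $(u_1(A),u_2(A))=\mathbf{u}^L$, since $\mathbf{u}^L$ is by definition the leximin-maximal achievable utility vector; the analogous statement holds for MNW, because any $A$ with $(u_1(A),u_2(A))=\mathbf{u}^N$ attains the maximum product. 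Hence it suffices to prove $\mathbf{u}^L=\mathbf{u}^N$: once this is established, the two families of allocations coincide verbatim.

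To compute the common optimum I would pass to a three-region parametrization. Write $c=\ell(W_1\cap W_2)$, $a_1=\ell(W_1\setminus W_2)$, and $a_2=\ell(W_2\setminus W_1)$, and let $y,z_1,z_2$ denote the lengths the allocation takes from $W_1\cap W_2$, $W_1\setminus W_2$, and $W_2\setminus W_1$ (allocating cake desired by neither agent affects no utility and only wastes budget, so it is irrelevant to both optima). Then $u_1=y+z_1$ and $u_2=y+z_2$ subject to $y+z_1+z_2\le\alpha$ and $0\le y\le c$, $0\le z_i\le a_i$. Maximizing $y$ for a fixed target shows the set of achievable utility vectors is exactly $\{(u_1,u_2): 0\le u_i\le\ell(W_i),\ u_1+u_2-\min(u_1,u_2,c)\le\alpha\}$, a convex polytope. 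Both solutions lie on its Pareto frontier, and both prefer to first exhaust the common region, since raising $y$ helps the smaller utility and, because it raises both coordinates, the product as well.

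The remaining work is a short case analysis on the frontier, which I expect to be the main obstacle, because leximin and Nash welfare need not coincide on a general convex region, so the argument must exploit the specific shape above. Assume without loss of generality that $\ell(W_1)\le\ell(W_2)$. If $\alpha\le c$, the point $(\alpha,\alpha)$ is feasible and dominates every other, so both solutions pick it. If $\alpha>c$, after setting $y=c$ there is residual budget $\beta=\alpha-c>0$ with $z_1+z_2=\beta$ on the tight budget line $u_1+u_2=\alpha+c$, and $u_1u_2=(c+z_1)(c+\beta-z_1)$ is a downward parabola in $z_1$ peaking at the balanced point $z_1=\beta/2$. When $\beta/2\le a_1$, the balanced vector $((\alpha+c)/2,(\alpha+c)/2)$ is feasible and simultaneously maximizes the product (the parabola's vertex) and the minimum utility, so $\mathbf{u}^L=\mathbf{u}^N$. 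When $\beta/2>a_1$, agent~$1$'s cap binds; monotonicity of the parabola on $[0,a_1]$ forces the Nash maximizer to $z_1=a_1$, i.e.\ it saturates agent~$1$ at $\ell(W_1)$ and pours the remaining budget into agent~$2$ up to $\ell(W_2)$, which is precisely what maximizing the minimum and then the maximum does as well. In every case $\mathbf{u}^L=\mathbf{u}^N$, which completes the proof.
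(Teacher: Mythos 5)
Your proof is correct in substance and, despite the utility-vector packaging, its computational core is essentially the paper's argument: the paper likewise observes that in the interesting regime both solutions must take all of $W_1\cap W_2$ and then either split the residual budget equally between $W_1\setminus W_2$ and $W_2\setminus W_1$ or, when the smaller side cannot absorb half of it, saturate that side---exactly your vertex-versus-cap dichotomy. The genuine difference is your opening reduction: using Propositions~\ref{prop:leximin-unique} and~\ref{prop:MNW-unique} to replace ``the two families of allocations coincide'' by ``the two optimal utility vectors coincide.'' That step is valid (any feasible allocation whose utility vector equals the optimal one is itself optimal, for either objective), and it buys a clean, uniform treatment of the ``vice versa'' direction, which the paper instead handles by characterizing the sets of optimal allocations case by case.

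One intermediate claim in your write-up is false, however: the set of achievable utility vectors is \emph{not} $\{(u_1,u_2): 0\le u_i\le\ell(W_i),\ u_1+u_2-\min(u_1,u_2,c)\le\alpha\}$. Take $W_1=W_2$, so $c=\ell(W_1)=1/2$ and $a_1=a_2=0$, with $\alpha=3/4$: the vector $(1/2,1/4)$ lies in your set, yet it is unachievable because $a_1=a_2=0$ forces $u_1=y=u_2$. The correct description needs the additional constraints $u_1-u_2\le a_1$ and $u_2-u_1\le a_2$. Fortunately this claim is never used---your case analysis works directly with the parametrization $(y,z_1,z_2)$---so deleting that sentence repairs the proof. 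Two smaller imprecisions: (i) ``raising $y$ helps both coordinates'' only justifies $y=c$ when the budget constraint is slack; when it is tight you need the swap that decreases some $z_i>0$ by $\delta$ and increases $y$ by $\delta$, which keeps $u_i$ fixed and strictly raises the other utility, again a strict improvement for both the leximin order and the product. (ii) The ``tight budget line $z_1+z_2=\beta$'' does not exist when $a_1+a_2<\beta$, i.e., when $\ell(W_1\cup W_2)<\alpha$; your phrase ``up to $\ell(W_2)$'' quietly covers this, but you should state explicitly that in that sub-case both solutions output the vector $(\ell(W_1),\ell(W_2))$.
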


\begin{proof}
Fix an instance with two agents, and let $X = W_1\cap W_2$ and $x=\ell(X)$.
If $x \geq \alpha$, then an allocation $A$ is leximin if and only if $A\subseteq X$, and the same holds for MNW.
Similarly, if $\ell(W_1 \cup W_2) \leq \alpha$, the relevant condition for both leximin and MNW is $W_1\cup W_2\subseteq A$.

Assume now that $x < \alpha < \ell(W_1 \cup W_2)$.
Since both the leximin and MNW solutions satisfy Pareto optimality, we must have $\ell(A) = \alpha$ and $X \subseteq A$ in any leximin or MNW allocation $A$.
In other words, the entire intersection of length $x$ must be allocated, along with a further length $\alpha-x$ of the cake.
Let $\Delta_1 = W_1\setminus W_2$ and $\Delta_2 = W_2\setminus W_1$, and consider two cases.

\underline{Case 1}: $\min\{\ell(\Delta_1), \ell(\Delta_2)\} \ge (\alpha-x)/2$.
In this case, for both leximin and MNW, the length $\alpha-x$ must be split equally between $\Delta_1$ and $\Delta_2$---otherwise the allocation can be improved with respect to both the leximin ordering and the Nash welfare by splitting the length equally.
Conversely, any allocation that splits the length $\alpha-x$ equally between $\Delta_1$ and $\Delta_2$ is both leximin and MNW.

\underline{Case 2}: $\min\{\ell(\Delta_1), \ell(\Delta_2)\} < (\alpha-x)/2$.
Assume without loss of generality that $\ell(\Delta_1) < (\alpha-x)/2$.
Since
$$\ell(\Delta_1) + \ell(\Delta_2) = \ell(W_1\cup W_2) - x > \alpha-x,$$
we have $\ell(\Delta_2) > (\alpha-x)/2$.
In this case, the entire $\Delta_1$ must be allocated---otherwise the allocation $A$ can be improved with respect to both the leximin ordering and the Nash welfare by allocating $\varepsilon$ more of $\Delta_1$ and $\varepsilon$ less of $\Delta_2$, for any $0 < \varepsilon < \ell(\Delta_1\setminus A)$.
Conversely, any allocation that allocates the entire $\Delta_1$ and length $\alpha-x-\ell(\Delta_1)$ of $\Delta_2$ is both leximin and MNW.

The desired conclusion follows from the two cases.
\end{proof}

Theorems~\ref{thm:leximin-truthful} and \ref{thm:MNW-leximin-equivalence} together imply the following:

\begin{corollary}
\label{cor:MNW-truthful-two}
For two agents and arbitrary tie-breaking, the MNW solution is truthful.
\end{corollary}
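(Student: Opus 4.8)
The plan is to derive the corollary immediately from the two results it cites, so the argument should be short. First I would invoke Theorem~\ref{thm:MNW-leximin-equivalence}: for any instance with two agents, the set of MNW allocations and the set of leximin allocations coincide. Since truthfulness compares the output on a truthful instance $R$ against the output on a misreported instance $R'$, the crucial observation is that this equivalence holds for \emph{every} two-agent instance individually, hence for both $R$ and $R'$. Consequently, whatever piece the MNW mechanism returns on a given instance (under whatever tie-breaking) is simultaneously a leximin allocation of that instance, and conversely.

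With that in hand, I would unwind Definition~\ref{def:truthful}. Fix an instance $R$, an agent $i$, and a misreport $W_i' \neq W_i$ giving the modified instance $R'$; let $A$ and $A'$ be the pieces selected by MNW on $R$ and $R'$ respectively. The inequality $u_i(A\cap W_i) \ge u_i(A'\cap W_i')$ that MNW-truthfulness demands is, by the equivalence above, exactly the inequality asserted by the truthfulness of the leximin solution (Theorem~\ref{thm:leximin-truthful}) applied to the leximin allocations $A$ of $R$ and $A'$ of $R'$. Because Theorem~\ref{thm:leximin-truthful} is stated for arbitrary tie-breaking, it does not matter which of the possibly several optimal pieces MNW selects on either instance: each such choice is a legitimate leximin choice, so the leximin guarantee transfers verbatim to MNW.

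I do not anticipate a genuine obstacle, since all the mathematical content resides in the two cited theorems; the corollary merely threads the equivalence of allocation sets through the truthfulness definition. The one point deserving a sentence of care is the handling of tie-breaking: one should note that the arbitrary-tie-breaking clause of Theorem~\ref{thm:leximin-truthful} is precisely what licenses applying the leximin bound to any allocation the MNW solution might output. Proposition~\ref{prop:MNW-unique} further certifies that the utility each agent obtains under MNW is independent of the tie-breaking, which is consistent with—though not strictly needed for—this argument.
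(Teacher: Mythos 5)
Your proposal is correct and matches the paper's derivation exactly: the paper states Corollary~\ref{cor:MNW-truthful-two} as an immediate consequence of Theorems~\ref{thm:leximin-truthful} and \ref{thm:MNW-leximin-equivalence}, which is precisely the argument you spell out. Your explicit handling of the tie-breaking point---that the equivalence holds instance-by-instance, so any MNW output on either the truthful or misreported instance is a legitimate leximin choice covered by the arbitrary-tie-breaking clause of Theorem~\ref{thm:leximin-truthful}---is the right (and only) detail worth making explicit.
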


When $n\ge 3$, the two mechanisms are no longer equivalent.
This can be seen from the instance with $W_1=[0,1/2]$ and $W_i=[1/2,1]$ for all $2\le i\le n$, and $\alpha = 1/2$.
The leximin solution selects length $1/4$ from each half of the cake, while MNW selects length $\frac{1}{2n}$ from the first half and $\frac{n-1}{2n}$ from the second half.
For our main result of this section, we demonstrate that the MNW solution is not truthful even when an agent is only allowed to report a subset of her true desired piece---as discussed in Section~\ref{sec:our-results}, this strengthens the non-truthfulness result of \citet{AzizBoMo20} where the manipulation is not of this simple nature.
In particular, we construct an instance with six agents such that one of the agents can obtain a higher utility by reporting a subset of her actual desired piece.

\begin{theorem}
\label{thm:MNW-negative}
The MNW solution is not truthful under subset reporting regardless of tie-breaking.
\end{theorem}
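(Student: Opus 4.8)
The plan is to exhibit a single explicit instance, designed to have a rich symmetry group, together with a subset manipulation by one agent, and to verify the claim directly from the optimality conditions of the MNW program. Concretely, I would fix $\alpha$ and six desired pieces $W_1,\dots,W_6$ so that the agents fall into symmetric ``roles''; the symmetry lets me pin down the honest MNW allocation without solving a general optimization. I would then designate one agent, say agent~$1$, let her report some $\widehat{W}_1\subsetneq W_1$, recompute the MNW allocation for the modified instance, and show that agent~$1$'s \emph{true} utility $\ell(\widehat{A}\cap\widehat{W}_1\cap W_1)$ strictly exceeds her honest utility. Since Proposition~\ref{prop:MNW-unique} guarantees that every agent's utility is the same in all MNW allocations, this comparison is independent of tie-breaking, which is exactly what the phrase ``regardless of tie-breaking'' in the statement requires.

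To determine the two MNW allocations, I would work with the interval variables $x_j$ (the length taken from each region between consecutive breakpoints), as in the proof of Proposition~\ref{prop:leximin-algorithm}. Each utility $u_i(A)=\sum_j \mathbf{1}_{I_j\subseteq W_i}\,x_j$ is linear in $\mathbf{x}$, so $\sum_i\log u_i(A)$ is concave and the maximizing \emph{utility vector} is unique (this is the content of Proposition~\ref{prop:MNW-unique}). Because the honest instance is invariant under a group of permutations of the agents together with a matching symmetry of the cake, this unique utility vector must itself be symmetric; this collapses the problem to a handful of variables, after which the first-order stationarity conditions $\partial_{x_j}\sum_i\log u_i=\lambda$ on the support, with complementary slackness against the constraints $0\le x_j\le\ell(I_j)$ and $\sum_j x_j\le\alpha$, determine the allocation. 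The same reasoning, applied to the smaller symmetry of the manipulated instance, yields the post-deviation utilities.

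The comparison itself is then an elementary arithmetic check: read off agent~$1$'s honest utility from the symmetric solution, read off $\ell(\widehat{A}\cap\widehat{W}_1\cap W_1)$ from the manipulated solution, and verify the strict inequality. The intuition for why a subset report can help is the one already flagged for the case $n\ge 3$: truncating $W_1$ removes from the objective the ``credit'' agent~$1$ was receiving for regions she shared with other agents, nudging the MNW optimum to concentrate more mass on the part of the cake she genuinely keeps.

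The main obstacle is the \emph{design} of the instance rather than any single deduction: I must choose $\alpha$ and the six pieces so that (i)~both the honest and the manipulated MNW allocations have a clean, hand-checkable symmetric form, (ii)~certifying optimality reduces to verifying finitely many stationarity equalities and feasibility inequalities that a reader can confirm without a computer, and (iii)~the resulting numbers make the deviation strictly profitable. Balancing these three requirements---in particular arranging enough symmetry that the optima are forced, while keeping the profitable-deviation margin strictly positive---is the delicate part; verifying the optimality conditions for the two candidate allocations is the step I expect to be the most technical, although each individual check should be routine.
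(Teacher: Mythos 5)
Your plan follows exactly the skeleton of the paper's own proof---a symmetric six-agent instance, a subset report by one agent, and Proposition~\ref{prop:MNW-unique} to make the comparison independent of tie-breaking---but as written it has a genuine gap: you never exhibit the instance, the value of $\alpha$, the deviation $\widehat{W}_1$, or any of the numbers to be compared. For an existence statement like Theorem~\ref{thm:MNW-negative}, the construction \emph{is} the theorem; the symmetry-plus-stationarity template you describe only becomes a proof once an instance satisfying your conditions (i)--(iii) is actually produced, and you explicitly defer that design problem rather than solve it. For reference, the paper's construction is the cake $[0,8]$ with $\alpha=2$ and $W_1=[0,1]\cup[2,8]$, $W_2=[0,1]\cup[2,5]$, $W_3=[0,1]\cup[5,8]$, $W_4=[1,3]\cup[5,6]$, $W_5=[1,2]\cup[3,4]\cup[6,7]$, $W_6=[1,2]\cup[4,5]\cup[7,8]$, with agent~$1$ reporting $[2,8]$; the key design feature is that every point of the cake is desired by exactly three agents, so that $\sum_{i}u_i(A)\le 6$ for any feasible $A$, whence by AM--GM $\prod_i u_i(A)\le 1$, attained by $A=[0,2]$. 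This covering argument certifies the honest optimum by hand, with no need for KKT conditions.

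A second, more technical caveat concerns the step you expect to be ``routine.'' After the deviation the symmetry group shrinks, and there is no reason the reduced optimization has a clean closed-form solution recoverable from stationarity; indeed, in the paper's instance it does not. The paper never computes the post-deviation optimum: an AM--GM symmetrization shows that any MNW allocation of the manipulated instance can be replaced by one of a two-parameter form with Nash welfare
\[
g(x,y)=x\left(2-\frac{x}{2}-y\right)^2\left(y+\frac{x}{3}\right)^3,
\]
where $x$ is agent~$1$'s (true and reported) utility, and then shows every maximizer has $x>1$ by an inequality rather than an exact solution: $g(1.5,0.5)=0.84375$, while $g(x,y)\le\frac{4}{9}\left(\frac{11}{10}\right)^5<0.72$ whenever $x\le 1$. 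So even granting your framework, the final ``arithmetic check'' is not a matter of reading off a stationary point; it required a separate bounding argument, and your proposal does not supply one or anything in its place.
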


\begin{proof}
Assume for convenience that the cake is represented by the interval $[0,8]$; this can be trivially scaled back down to $[0,1]$.
In our original instance, there are six agents whose utility functions are given as follows:
\begin{align*}
W_1 = [0,1]\cup[2,8],& \quad W_2 = [0,1]\cup [2,5], \\
W_3 = [0,1]\cup [5,8],& \quad W_4 = [1,3]\cup [5,6], \\
W_5 = [1,2]\cup [3,4]\cup [6,7],& \quad W_6 = [1,2]\cup [4,5]\cup [7,8],
\end{align*}
and let $\alpha = 2$.
See Figure~\ref{fig:MNW}.

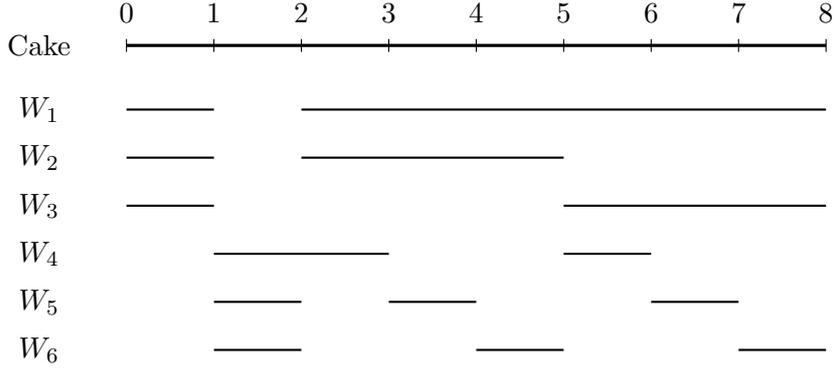
\begin{figure}
\centering
\begin{tikzpicture}[xscale=1.15,yscale=0.85]
\node at (2,5.5) {$0$};
\node at (3,5.5) {$1$};
\node at (4,5.5) {$2$};
\node at (5,5.5) {$3$};
\node at (6,5.5) {$4$};
\node at (7,5.5) {$5$};
\node at (8,5.5) {$6$};
\node at (9,5.5) {$7$};
\node at (10,5.5) {$8$};
\draw[very thick] (2,5) -- (10,5);
\draw (2,4.9) -- (2,5.1);
\draw (3,4.9) -- (3,5.1);
\draw (4,4.9) -- (4,5.1);
\draw (5,4.9) -- (5,5.1);
\draw (6,4.9) -- (6,5.1);
\draw (7,4.9) -- (7,5.1);
\draw (8,4.9) -- (8,5.1);
\draw (9,4.9) -- (9,5.1);
\draw (10,4.9) -- (10,5.1);
\node at (1,5) {Cake};
\draw[thick] (2,4) -- (3,4);
\draw[thick] (4,4) -- (10,4);
\node at (1,4) {$W_1$};
\draw[thick] (2,3.25) -- (3,3.25);
\draw[thick] (4,3.25) -- (7,3.25);
\node at (1,3.25) {$W_2$};
\draw[thick] (2,2.5) -- (3,2.5);
\draw[thick] (7,2.5) -- (10,2.5);
\node at (1,2.5) {$W_3$};
\draw[thick] (3,1.75) -- (5,1.75);
\draw[thick] (7,1.75) -- (8,1.75);
\node at (1,1.75) {$W_4$};
\draw[thick] (3,1) -- (4,1);
\draw[thick] (5,1) -- (6,1);
\draw[thick] (8,1) -- (9,1);
\node at (1,1) {$W_5$};
\draw[thick] (3,0.25) -- (4,0.25);
\draw[thick] (6,0.25) -- (7,0.25);
\draw[thick] (9,0.25) -- (10,0.25);
\node at (1,0.25) {$W_6$};
\end{tikzpicture}
\caption{The original instance in the proof of Theorem~\ref{thm:MNW-negative}.
}
\label{fig:MNW}
\end{figure}

First, observe that in this instance, every (non-integer) point is valued by exactly three agents.
Hence, for any subset~$A$ of the cake with $\ell(A)\le 2$, we have $\sum_{i=1}^6 u_i(A)\le 6$.
By the inequality of arithmetic and geometric means (AM-GM), it holds that $\prod_{i=1}^6 u_i(A) \le 1$.
By choosing $A=[0,2]$, we obtain $u_i(A)= 1$ for each $i$, so this choice of $A$ maximizes the Nash welfare as $\ell(A) = 2$ and $\prod_{i=1}^{6} u_i(A_i) = 1$, and gives agent~$1$ a utility of~$1$.
By Proposition~\ref{prop:MNW-unique}, agent~$1$ receives utility $1$ in every MNW allocation.

Next, consider a modified instance where agent~$1$ reports $W_1 = [2,8]$, which is a strict subset of the true $W_1$.
Consider an MNW allocation $A$ for this instance, and let $x:=\ell(A\cap [2,8])$, $y:= \ell(A\cap[1,2])$, and $z:=\ell(A\cap[0,1]) $, so $x+y+z\le 2$.
Let $A'$ be an allocation such that $\ell(A'\cap[0,1]) = 2-x-y\ge z$, $\ell(A'\cap[1,2]) = y$, and $|A'\cap [j,j+1]| = x/6$ for $j\in\{2,3,\dots,7\}$.
Notice that $\ell(A') = 2$, so $A'$ is a feasible allocation.
We claim that $\prod_{i=1}^6 u_i(A') \ge \prod_{i=1}^6 u_i(A)$.
Indeed, letting $\tau := \ell(A\cap [2,5])$, we have
\begin{align*}
u_2(A)\cdot u_3(A)
&= (z + \tau)(z + (x- \tau)) \le \left(z + \frac{x}{2}\right)^2 \le u_2(A')\cdot u_3(A'),
\end{align*}
where the first inequality follows from the AM-GM inequality.
Similarly, letting $\theta := \ell(A\cap ([2,3]\cup[5,6]))$ and $\rho := \ell(A\cap ([3,4]\cup[6,7]))$, it holds that
\begin{align*}
u_4(A)\cdot u_5(A)\cdot u_6(A) 
&= (y + \theta)(y + \rho)(y + (x-\theta-\rho)) \\
&\le \left(y + \frac{x}{3}\right)^3 \\ &= u_4(A')\cdot u_5(A')\cdot u_6(A').
\end{align*}
Moreover, since $u_1(A) = u_1(A') = x$, it follows that $\prod_{i=1}^6 u_i(A') \ge \prod_{i=1}^6 u_i(A)$, as claimed.
This means that $A'$ is also an MNW allocation.
The Nash welfare of $A'$ is
\[
\prod_{i=1}^6 u_i(A') = x\left(2-\frac{x}{2}-y\right)^2\left(y+\frac{x}{3}\right)^3.
\]
In order to show that MNW is not truthful regardless of tie-breaking, by Proposition~\ref{prop:MNW-unique}, it suffices to show that the maximum of this expression in the domain $x,y\ge 0, x+y\in[0,2]$ is attained when $x > 1$, since this would imply that agent~$1$ has a profitable deviation.

Let $g(x,y) := x\left(2-\frac{x}{2}-y\right)^2\left(y+\frac{x}{3}\right)^3$, where $x,y\ge 0$ and $x+y\le 2$.
We have $g(1.5,0.5) = 0.84375$.
Now, from the AM-GM inequality,
\begin{align*}
\frac{9}{4}\cdot g(x,y)
&= x\left(3-\frac{3x}{4}-\frac{3y}{2}\right)^2\left(y+\frac{x}{3}\right)^3 \\
&\le x\left(\frac{2(3-3x/4-3y/2) + 3(y+x/3)}{5}\right)^5 = x\left(\frac{6-x/2}{5}\right)^5.
\end{align*}
The derivative of the last expression is
$
\left(\frac{6-3x}{5}\right)\left(\frac{6-x/2}{5}\right)^4
$,
which is nonnegative for $0\le x\le 2$.
This means that for $x\le 1$, we have
\[
\frac{9}{4}\cdot g(x,y) \le 1\cdot\left(\frac{6-1/2}{5}\right)^5=\left(\frac{11}{10}\right)^5,
\]
so $g(x,y)\le \frac{4}{9}\cdot (1.1)^5 < 0.72 < g(1.5,0.5)$.
It follows that the maximum of $g(x,y)$ is attained when $x > 1$, as desired.
\end{proof}

We remark here that even if we allow the MNW solution to choose any $A_i$ such that $A\cap W_i\subseteq A_i\subseteq A$ instead of always choosing $A_i = A\cap W_i$ (that is, the mechanism may give agent~$i$ some parts of $A$ that she does not value, along with all parts of $A$ that she values), our example in Theorem~\ref{thm:MNW-negative} still shows that any resulting mechanism is not truthful under subset reporting.

From the fairness perspective, we show later in \Cref{thm:MNW-ratio} that the MNW solution achieves the same egalitarian ratio as the leximin solution.
However, the fact that the MNW solution is prone to a particularly simple form of manipulation makes it an unsuitable choice when truthfulness is essential.

\section{Impossibility Result Without Blocking}
\label{sec:no-blocking}

As we have so far assumed that mechanisms can block agents from accessing certain parts of the resource, an interesting question is what guarantees the mechanisms can achieve without the ability to block.
Indeed, while blocking can be easily implemented in our introductory applications by restricting access to the sports facility or files in a cache memory, it may be harder or more costly in other situations, e.g., cleaning streets or constructing public parks.
In this section, we consider mechanisms without the blocking ability.
When no blocking is allowed, given an input instance, a mechanism $\mech$ simply chooses a piece of cake~$A$ with $\ell(A) \le \alpha$, and each agent $i$ receives a utility of $u_i(A) = \ell(A \cap W_i)$.

First, we observe that while the leximin solution is truthful if it has the ability to block (Theorem~\ref{thm:leximin-truthful}), this is no longer the case in the absence of blocking.

\begin{example}[Leximin is not truthful without blocking]
\label{ex:leximin-w.o-blocking}
Let $\alpha = 1/2$.
First, consider an instance $R$ with two agents whose utility functions are given as follows:
\[
W_1 = [0, 1/2], \quad W_2 = [1/2, 1].
\]
Assume without loss of generality that the tie-breaking rule chooses $A = [1/4, 3/4]$.
Next, consider an instance $R'$ with the following utility functions:
\[
W_1 = [0, 3/4], \quad W_2 = [1/2, 1].
\]
Agent~$1$ receives a utility of $3/8$ in every leximin allocation for $R'$.
However, if agent~$1$ misreports that $W_1 = [0, 1/2]$, the instance becomes the same as $R$, and agent~$1$ receives a utility of $1/2$ from the allocation $A$.
\end{example}

Our main result of this section shows that Example~\ref{ex:leximin-w.o-blocking} is in fact not a coincidence.

\begin{theorem}\label{thm:impossibility-no-blocking}
Without blocking, for every $\alpha \in (0, 1)$, no truthful, Pareto optimal, and position oblivious mechanism can achieve a positive egalitarian ratio even in the case of two agents.
\end{theorem}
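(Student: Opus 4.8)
The plan is to assume, for contradiction, a truthful, Pareto optimal, and position oblivious mechanism $\mech$ (without blocking) whose egalitarian ratio for $n=2$ and the given $\alpha$ is at least some $\gamma>0$, and to derive a contradiction; since no positive $\gamma$ can survive, the ratio must be $0$. Throughout I would exploit the defining feature of the no-blocking model: once $\mech$ selects a piece $A$, every agent collects $\ell(A\cap W_i)$ for her \emph{true} $W_i$, so an agent can ``free-ride'' on cake that $\mech$ allocates on account of the other agent. Pareto optimality forces $\mech$ to allocate only inside the union of the two reported pieces and to exhaust the quota whenever the pieces are disjoint and over-demand the cake; so I first record that on a disjoint instance with reported lengths $(x,y)$, $x+y\ge\alpha$, the two utilities sum to exactly $\alpha$. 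Writing $\phi(x,y)$ and $\psi(x,y)=\alpha-\phi(x,y)$ for the (position oblivious, hence well defined) shares, the ratio gives $\phi(x,y)\ge\gamma x$ and $\psi(x,y)\ge\gamma y$.

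The heart of the argument is a ``probe-and-plant'' construction in the spirit of the proof of Theorem~\ref{thm:impossibility}, upgraded to use free-riding. Fix lengths $a,W$ with $a+W\ge\alpha$, place agent~1's piece $P$ (length $a$) disjoint from agent~2's fixed piece $W_2$ (length $W$), and run $\mech$ on this disjoint instance; let $Y:=A\cap W_2$ be the set (of length $\psi(a,W)$) on which $\mech$ actually serves agent~2. Now define the \emph{true} instance $R''$ in which agent~1 genuinely desires $P\cup Y$ while agent~2 still desires $W_2$; here $Y$ becomes the overlap, so $R''$ has profile $\bigl(a,\,W-\psi(a,W),\,\psi(a,W)\bigr)$. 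If in $R''$ agent~1 instead reports $P$, the mechanism faces exactly the probe instance, serves agent~2 precisely on $Y\subseteq P\cup Y$, and hands agent~1 the true utility $\phi(a,W)+\psi(a,W)=\alpha$. Truthfulness then forces her honest utility in $R''$ to be at least $\alpha$, hence exactly $\alpha$; by position obliviousness this pins $\mech$ on the whole profile:
\[
\Phi\bigl(a,\,W-\psi(a,W),\,\psi(a,W)\bigr)=\alpha,\qquad \Psi\bigl(a,\,W-\psi(a,W),\,\psi(a,W)\bigr)=\psi(a,W),
\]
where $\Phi,\Psi$ denote the two agents' position oblivious utilities on a profile $(\ell_1,\ell_2,\ell_{12})$. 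Running the mirror-image construction with the agents exchanged gives, for all $V,b'$ with $V+b'\ge\alpha$,
\[
\Phi\bigl(V-\phi(V,b'),\,b',\,\phi(V,b')\bigr)=\phi(V,b').
\]

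To finish I would exhibit a single profile arising from \emph{both} constructions. Matching the two displays forces $\ell_{12}=\psi(a,W)=\phi(V,b')$ together with $V=a+\psi(a,W)$ and $b'=W-\psi(a,W)$, which collapses to the single equation
\[
\phi\bigl(a+\psi(a,W),\,W-\psi(a,W)\bigr)=\psi(a,W).
\]
At such a profile the first family gives $\Phi=\alpha$ while the second gives $\Phi=\psi(a,W)\le\alpha-\gamma a<\alpha$, the contradiction. Fixing $S:=a+W>\alpha$ and varying $a$ over $(0,S)$, the difference of the two sides of this equation is strongly negative for tiny $a$ (then $\psi(a,W)\approx\alpha$ while the left side stays bounded away from $\alpha$) and positive for $a$ near $S$ (then $\psi(a,W)\approx0$ while the left side is $\approx\alpha$), so an intermediate value of $a$ yields the common profile.

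I expect the last step to be the main obstacle. Position obliviousness controls only \emph{lengths}, not \emph{positions}, so the entire plan relies on the probing step to guarantee that the planted overlap lands exactly where $\mech$ serves the opponent, and on arguing that the two parametrized families of profiles genuinely cross. Making the crossing rigorous—rather than merely exhibiting a sign change—is the delicate point, to be handled either by a continuity/compactness argument on the share function $\phi$ or by arranging the two constructions to reuse a single probe so that the common profile is produced by design. The remaining bookkeeping (that $n=2$ already suffices, that all pieces fit in $[0,1]$, and that $\gamma>0$ enters only through $\psi(a,W)\le\alpha-\gamma a<\alpha$) is routine.
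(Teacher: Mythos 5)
Your overall strategy---free-riding (without blocking, truthfulness lets an agent collect cake allocated on the opponent's account), Pareto optimality to pin total utility at $\alpha$ on disjoint over-demanding instances, and position obliviousness to collide two instances with identical length-profiles---is exactly the paper's strategy. But there is a genuine gap at the step you yourself flag: the existence of a common profile for your two parametrized families rests on an intermediate-value argument applied to the share function $\phi$, and nothing in the hypotheses (truthfulness, Pareto optimality, position obliviousness, positive ratio) makes $\phi$ continuous in the reported lengths. A mechanism may split disjoint demands in a wildly discontinuous way, so the sign change of $h(a)=\phi\bigl(a+\psi(a,W),\,W-\psi(a,W)\bigr)-\psi(a,W)$ does not yield a zero, and the contradiction never materializes. (Two smaller slips, which happen to be harmless: in your first family you can only conclude $\Psi\le\psi(a,W)$, not equality, and symmetrically $\Phi\le\phi(V,b')$ in the second; only these upper bounds are actually used.)

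The paper closes precisely this gap by producing the colliding pair \emph{by design}, with two probes instead of one crossing. Probe $R^1$ (disjoint halves) gives agent~1 utility $x\in(0,\alpha)$; agent~2 then plants the allocated set into her own report, yielding $R^2$ with profile $(1/2-x,\,1/2,\,x)$ in which free-riding forces agent~2's utility to $\alpha$ and hence agent~1's to at most $x$---this is your mirror family. The key move is the \emph{second} probe $R^3$, run on the residual disjoint lengths $(1/2-x,\,1/2+x)$, after which agent~1 plants an interval $B$ of length \emph{exactly} $x$, chosen inside agent~2's piece to maximize its overlap with the allocation $A^3$; the resulting $R^4$ has the same profile $(1/2-x,\,1/2,\,x)$, and deviating back to $R^3$ would give agent~1 utility $x+y$ or $\alpha$, so truthfulness forces her utility in $R^4$ strictly above $x$. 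Note the trade-off that makes this work: the paper truncates the plant to length $x$ so that profile equality holds identically (accepting the weaker bound ``$>x$'' instead of your clean ``$=\alpha$''), whereas you plant the entire served set to get $\alpha$ but thereby lose control of the profile and are forced into the crossing argument. Your suggested fallback (``reuse a single probe so the common profile is produced by design'') is indeed the right repair, but it is the heart of the proof rather than bookkeeping, and as written your proposal is incomplete without it.
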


\begin{proof}
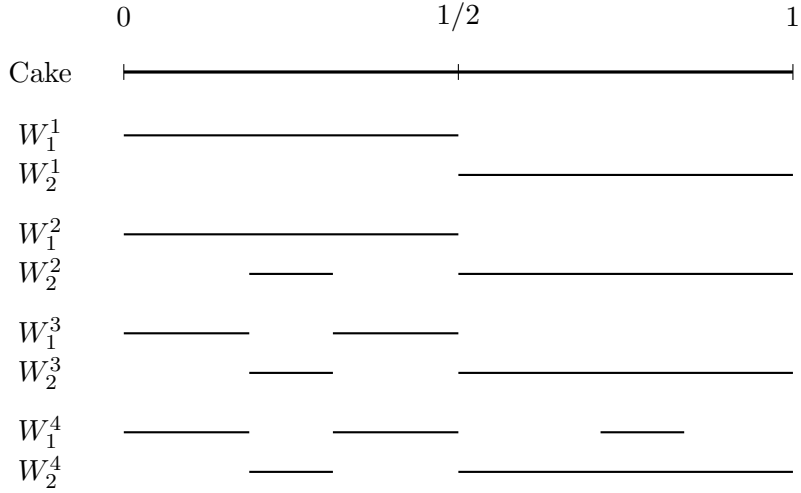
\begin{figure}
\centering
\begin{tikzpicture}[xscale=1.1,yscale=1.05]
\node at (2,5.5) {$0$};
\node at (6,5.5) {$1/2$};
\node at (10,5.5) {$1$};
\draw[very thick] (2,4.8) -- (10,4.8);
\draw (2,4.7) -- (2,4.9);
\draw (6,4.7) -- (6,4.9);
\draw (10,4.7) -- (10,4.9);
\node at (1,4.8) {Cake};
\draw[thick] (2,4) -- (6,4);
\node at (1,4) {$W^1_1$};
\draw[thick] (6,3.5) -- (10,3.5);
\node at (1,3.5) {$W^1_2$};
\draw[thick] (2,2.75) -- (6,2.75);
\node at (1,2.75) {$W^2_1$};
\draw[thick] (3.5,2.25) -- (4.5,2.25);
\draw[thick] (6,2.25) -- (10,2.25);
\node at (1,2.25) {$W^2_2$};
\draw[thick] (2,1.5) -- (3.5,1.5);
\draw[thick] (4.5,1.5) -- (6,1.5);
\node at (1,1.5) {$W^3_1$};
\draw[thick] (3.5,1) -- (4.5,1);
\draw[thick] (6,1) -- (10,1);
\node at (1,1) {$W^3_2$};
\draw[thick] (2,0.25) -- (3.5,0.25);
\draw[thick] (4.5,0.25) -- (6,0.25);
\draw[thick] (7.7,0.25) -- (8.7,0.25);
\node at (1,0.25) {$W^4_1$};
\draw[thick] (3.5,-0.25) -- (4.5,-0.25);
\draw[thick] (6,-0.25) -- (10,-0.25);
\node at (1,-0.25) {$W^4_2$};
\end{tikzpicture}
\caption{Example instances in the proof of Theorem~\ref{thm:impossibility-no-blocking}.
}
\label{fig:impossibility-no-blocking}
\end{figure}

We assume for contradiction that there exists some $\alpha \in (0, 1)$ and a truthful, Pareto optimal, and position oblivious mechanism $\mech$ with $\text{Egal-ratio}_{2,\alpha}(\mech) > 0$.
We consider a sequence of instances with two agents, which we illustrate in Figure~\ref{fig:impossibility-no-blocking}.
In the following, the superscripts denote the indices of the instances.
In all of the instances that we consider, every part of the cake is desired by at least one agent, so Pareto optimality implies that $\mech$ must allocate exactly $\alpha$ of the cake.
\begin{itemize}
\item Instance $R^1$: $W^1_1 = [0, 0.5], W^1_2 = [0.5, 1]$. \\
Let $\mech(R^1) = A^1$.
Because $\alpha < 1$,  at least one of the agents will not obtain her maximum utility of $0.5$.
Assume without loss of generality that $\ell(W^1_1 \cap A^1) = x < 0.5$; in other words, $W^1_1 \setminus A^1$ is nonempty.
Since $\mech$ has a positive egalitarian ratio, it must hold that $0 < x < \alpha$.

\item Instance $R^2$: $W^2_1 = [0, 0.5], W^2_2 = A^1 \cup [0.5, 1]$. \\
Let $\mech(R^2) = A^2$.
We must have $A^2 \subseteq W^2_2$; in other words, agent~$2$ will receive utility $\alpha$.
This is because otherwise, agent~$2$ can benefit by reporting ${W^2_2}' = [0.5, 1]$ and the instance becomes $R^1$, in which case agent~$2$ will receive utility $\alpha$ from the output allocation $A^1$.
Note that because $A^2$ is contained entirely in $W^2_2$, we still have $\ell(W^2_1 \cap A^2) \leq x$.

\item Instance $R^3$: $W^3_1 = [0, 0.5] \setminus A^1, W^3_2 = A^1 \cup [0.5, 1]$. \\
Let $\mech(R^3) = A^3$.
By the positive egalitarian ratio, we have $\ell(W^3_1 \cap A^3) = y > 0$.

\item Instance $R^4$: $W^4_1 = \left([0, 0.5] \setminus A^1\right) \cup B, W^4_2 = A^1 \cup [0.5, 1]$, where $B$ is an interval of length $x$ contained in $W^3_2$ with the largest intersection with $A^3$.
That is,
\begin{itemize}
\item if $\ell(W^4_2 \cap A^3) \geq x$, let $B$ be any subset of $W^4_2 \cap A^3$ of length $x$;
\item if $\ell(W^4_2 \cap A^3) < x$, let $B$ be any interval of length~$x$ that contains $W^4_2 \cap A^3$.
\end{itemize}
Let $\mech(R^4) = A^4$.
In this instance, we must have $u_1(A^4) > x$.
This is because otherwise, agent~$1$ can benefit by reporting ${W^4_1}' = [0, 0.5] \setminus A^1$ and the instance becomes $R^3$, in which case agent~$1$ will obtain a utility of $x + y$ (when $\ell(W^4_2 \cap A^3) \geq x$) or a utility of $\alpha$ (when $\ell(W^4_2 \cap A^3) < x$).
In both cases this value is strictly larger than $x$.
\end{itemize}
Finally, observe that instances $R^2$ and $R^4$ have the same $L_\mathbf{f}$ vector.
In particular, we have
$\ell(W^2_1) = \ell(W^4_1) = 1/2$, $\ell(W^2_2) = \ell(W^4_2) = 1/2+x$, and $\ell(W^2_1 \cap W^2_2) = \ell(W^4_1 \cap W^4_2) = x$.
This means that each agent should receive the same utility in these two instances from our position oblivious mechanism $\mech$.
However, agent~$1$ receives utility at most $x$ in $R_2$ and utility strictly larger than $x$ in $R^4$.
We have reached a contradiction.
\end{proof}

\section{Conclusion and Future Work}
\label{sec:discussion}

In this paper, we have studied truthful and fair mechanisms in the cake sharing setting where all agents share the same subset of a heterogeneous divisible resource.
Our results establish the leximin solution as an attractive mechanism due to its truthfulness and its optimal egalitarian ratio among all truthful and position oblivious mechanisms.
On the other hand, we constructed an intricate example showing that the maximum Nash welfare (MNW) solution, which often exhibits desirable properties in other settings, fails to yield truthfulness in cake sharing even when the agents are restricted to subset reporting.
Moreover, we showed that in the absence of blocking, no truthful, Pareto optimal, and position oblivious mechanism can achieve a positive egalitarian ratio---in particular, this implies that the leximin solution is \emph{not} truthful without blocking.
An intriguing question is whether the impossibility still holds if we remove Pareto optimality or position obliviousness (or both), or whether there is a truthful mechanism that attains a non-trivial fairness guarantee even when blocking is not allowed.
Besides the properties that we have investigated, one could consider other desirable properties, for example those based on justified representation notions in multiwinner voting \citep{AzizBrCo17,Sanchez-FernandezElLa17}.
In Appendix~\ref{app:JR}, we discuss how such notions can be adapted to cake sharing, and present results on leximin and MNW with respect to these notions.

In future research, it would be interesting to extend our cake sharing model to capture other practical scenarios.
One natural direction is to allow agents to have more complex preferences beyond piecewise uniform utilities.
The first step in this direction would be to consider \emph{piecewise constant} utilities, where an agent's density function is constant over subintervals of the cake.
Another extension is to allow non-uniform costs over the cake---this models, for example, the fact that reserving a sports facility or a conference room can be more expensive during peak periods.
In Appendix~\ref{app:non-uniform-costs}, we show that a natural generalization of the leximin solution is still truthful and achieves the optimal egalitarian ratio for piecewise constant cost functions.
Furthermore, as in cake cutting, it may be fruitful to consider scenarios in which there are constraints on the shared cake \citep{Suksompong21}.\footnote{A desirable property in certain applications is \emph{contiguity}---for example, a contiguous time slot is often more useful than a union of disconnected slots.
However, note that if contiguity is imposed, we may not be able to avoid leaving some agents empty-handed.
A trivial example is when one agent only values a small piece of cake at the left end while another agent only values one at the right end.
In this case, unless $\alpha$ is very close to $1$, one of the agents will necessarily receive utility $0$.}
Other questions addressed in cake cutting, such as the \emph{price of truthfulness} and the \emph{price of fairness}---that is, the loss of social welfare due to truthfulness and fairness, respectively  \citep{CaragiannisKaKa12,MayaNi12}---are equally relevant and worthy of exploration in cake sharing as well.

\section*{Acknowledgments}

This work was partially supported by  the Ministry of Education, Singapore, under its Academic Research Fund Tier~1 (RG23/20), and by an NUS Start-up Grant.
The work was mostly done while the second author was at Nanyang Technological University and the National University of Singapore.
We would like to thank Haris Aziz for suggesting \Cref{def:AFS} and \Cref{thm:MNW-AFS}, Ilya Bogdanov for help with the proof of Theorem~\ref{thm:MNW-negative}, and the anonymous reviewers of AAAI 2022 for their valuable feedback.

\bibliographystyle{plainnat}
\bibliography{main}

\appendix

\section{Justified Representation Notions}
\label{app:JR}

In multiwinner voting, where the goal is to choose a certain number of candidates from the available candidates \citep{FaliszewskiSkSl17,LacknerSk21}, a desideratum that has received significant attention in recent years is \emph{justified representation (JR)} \citep{AzizBrCo17}.
If there are $n$ agents with approval preferences over the candidates and $k$ candidates are to be chosen, JR demands that whenever at least $n/k$ agents approve a common candidate, at least one of these agents is represented in the selected set of candidates (usually referred to as a \emph{committee}).
Two important strengthenings of JR have been proposed.
\begin{itemize}
\item A committee is said to provide \emph{proportional justified representation (PJR)} \citep{Sanchez-FernandezElLa17} if for every positive integer $r$ and every set of agents $N^*\subseteq N$ such that $|N^*| \ge r\cdot n/k$ and the agents in $N^*$ commonly approve $r$ candidates, at least $r$ candidates in the committee are approved by at least one of these agents.
\item A committee is said to provide \emph{extended justified representation (EJR)} \citep{AzizBrCo17} if for every positive integer $r$ and every set of agents $N^*\subseteq N$ such that $|N^*| \ge r\cdot n/k$ and the agents in $N^*$ commonly approve $r$ candidates, some agent in $N^*$ has at least $r$ approved candidates in the committee.
\end{itemize}
It follows directly from the definitions that EJR implies PJR, which in turn implies JR.
\citet{AzizBrCo17} showed that a committee providing EJR (and therefore PJR and JR) always exists.

In this section, we discuss how these notions can be adapted to our cake sharing setting, and how leximin and MNW perform with respect to the resulting notions.
First, observe that since there is no discrete unit of candidate in our setting, JR does not have a clear analog in cake sharing.
By contrast, both PJR and EJR admit natural adaptations.
Since we do not address truthfulness in this section, we ignore the issue of blocking and assume that an allocation is simply a piece of cake $A$ with $\ell(A) \le \alpha$.

\begin{definition}
\label{def:PJR}
Given an instance, an allocation~$A$ with $\ell(A) \le \alpha$ is said to provide \emph{proportional justified representation (PJR)} if for every positive real number $t$ and every set of agents $N^*\subseteq N$ such that $|N^*| \geq t\cdot n/\alpha$ and $\ell\left(\bigcap_{i \in N^*} W_i\right) \geq t$, it holds that $\ell\left(A \cap \left( \bigcup_{i \in N^*} W_i \right)\right) \geq t$.
\end{definition}

\begin{definition}
\label{def:EJR}
Given an instance, an allocation~$A$ with $\ell(A) \le \alpha$ is said to provide \emph{extended justified representation (EJR)} if for every positive real number $t$ and every set of agents $N^*\subseteq N$ such that $|N^*| \geq t\cdot n/\alpha$ and $\ell\left(\bigcap_{i \in N^*} W_i\right) \geq t$, it holds that $\ell(A\cap W_{j}) \geq t$ for some~$j \in N^*$.
\end{definition}

As in the discrete setting, it is clear that EJR implies PJR.
A mechanism is said to satisfy PJR (resp., EJR) if it outputs an allocation that provides PJR (resp., EJR) for every instance.

Since leximin may ignore the preferences of large groups of agents in order to maximize the minimum utility, it can be easily shown to violate both axioms.

\begin{proposition}
The leximin solution satisfies neither PJR nor EJR.
\end{proposition}

\begin{proof}
Since PJR is implied by EJR, it suffices to prove the statement for PJR.
Let $n\ge 3$, and consider an instance with $W_1 = [0, 1/2]$ and $W_i = [1/2, 1]$ for all $2 \leq i \leq n$.
The leximin solution selects length $\alpha / 2$ from each half of the cake.
Choose $t$ such that
\[
\frac{\alpha}{2} < t < \min \left\{ \alpha \cdot \frac{n-1}{n}, \frac{1}{2} \right\};
\]
this choice is always possible as $\alpha\in(0,1)$.
Since $n - 1 \geq t \cdot n/\alpha$ and agents~$2,\dots,n$ commonly approve a piece of cake of length $1/2 \ge t$, PJR dictates that a piece of length at least~$t$ must be selected from~$[1/2, 1]$.
However, leximin selects length $\alpha/2 < t$ from this interval.
\end{proof}

Next, we show that MNW satisfies both of the representation notions.
This result can be seen as a continuous analog of \citet{AzizBrCo17}'s result that the \emph{Proportional Approval Voting (PAV)} rule satisfies EJR in the discrete setting.
Indeed, PAV assigns a utility of $1 + \frac{1}{2}+\dots + \frac{1}{j} \approx \ln j$ to an agent whenever the agent approves $j$ candidates in the committee, and chooses a committee that maximizes the sum of the agents' utilities.
Since maximizing a product is equivalent to maximizing the sum of the logarithms of its terms, PAV in multiwinner voting is closely related to MNW in cake sharing.

To aid the presentation of our proof, we introduce some definitions.

\begin{definition}
\label{def:Nash-marginal}
Given an instance and an allocation~$A$, for every point~$x$ that is not included in~$A$ and is not a breakpoint, we define its \emph{Nash addition marginal} as
\[
\text{NAM}_A(x) = \lim_{\varepsilon \to 0} \frac{\sum_{i \in N} \ln u_i(\allocAdd) - \sum_{i \in N} \ln u_i(A)}{\varepsilon},
\]
where~$\allocAdd$ is obtained by adding to $A$ a piece of cake of length~$\varepsilon$ adjacent to the point~$x$.
Analogously, for every point~$x$ that is included in~$A$ and is not a breakpoint, we define its \emph{Nash removal marginal} as
\[
\text{NRM}_A(x) = \lim_{\varepsilon \to 0} \frac{\sum_{i \in N} \ln u_i(A) - \sum_{i \in N} \ln u_i(\allocRemove)}{\varepsilon},
\]
where~$\allocRemove$ is obtained by removing from $A$ a piece of cake of length~$\varepsilon$ adjacent to the point $x$.
\end{definition} 

Intuitively, the Nash addition marginal is the rate of change of (the logarithm of) the Nash welfare when we add cake adjacent to a certain point.
We do not define the Nash addition marginal for breakpoints because adding cake to the left and to the right of a breakpoint may yield different marginals.
Similar statements hold for the Nash removal marginal.

\begin{theorem}\label{thm:MNW-EJR}
The MNW solution satisfies EJR and PJR.
\end{theorem}

\begin{proof}
Since EJR implies PJR, it suffices to establish the statement for EJR.
Suppose for contradiction that MNW violates EJR for some instance with parameter $\alpha$, and consider a value $t$ and a set of agents~$N^*$ with $|N^*| \geq t \cdot n/\alpha$ who commonly approve a piece of cake~$S$ with $\ell(S) \geq t$ such that Definition~\ref{def:EJR} is not satisfied.
Let~$A$ be the piece of cake chosen by MNW, and assume without loss of generality\footnote{If $\ell(A) < \alpha$, this means that $\ell\left(\bigcup_{i\in N}W_i\right) < \alpha$, and any allocation returned by the MNW solution trivially provides EJR.} that $\ell(A) = \alpha$.
Since Definition~\ref{def:EJR} is violated, we have $u_i(A)  < t$ for all $i\in N^*$.

Suppose that $A$ is a disjoint union of the intervals $I_1,\dots,I_k$, where each interval $I_j$ is valued either entirely or not at all by each agent.
We have that $\ell(S) \ge t$, $u_i(A) < t$ for every $i\in N^*$, and all agents in $N^*$ approve the entire piece of cake $S$.
In particular, not all of $S$ is contained in $A$.
Let $z$ be a point in $S\setminus A$ that is not a breakpoint.
The contribution of each agent~$i \in N^*$ to the Nash addition marginal at point~$z$ is $(\ln u_i(A))' = 1 / u_i(A)$.
Thus,
\[
\text{NAM}_A(z) \geq  |N^*| \cdot \frac{1}{\max_{i \in N^*} u_i(A)} \geq t \cdot \frac{n}{\alpha} \cdot \frac{1}{\max_{i \in N^*} u_i(A)} > t \cdot \frac{n}{\alpha} \cdot \frac{1}{t} = \frac{n}{\alpha}.
\]
On the other hand, we have
\begin{align*}
\int_A \text{NRM}_A(x) \diff x 
&= \sum_{j = 1}^{k} \int_{I_j} \text{NRM}_A(x) \diff x \\
&= \sum_{j = 1}^{k} \int_{I_j} \left( \sum_{i \in N \colon I_j \subseteq W_i} \frac{1}{u_i(A)} \right) \diff x \\
&= \sum_{i \in N} \left( \int_{A\cap W_i} \frac{1}{u_i(A)} \diff x \right) = \sum_{i \in N} \frac{\ell(A\cap W_i)}{u_i(A)} = \sum_{i \in N} \frac{u_i(A)}{u_i(A)} = n;
\end{align*}
note that when we take integrals, we can safely ignore breakpoints because they form a set of measure zero.
Since $\ell(A) = \alpha$, there exists a point~$y\in A$ whose Nash removal marginal is at most $n/\alpha$.
In particular, the Nash addition marginal at $z$ is  strictly greater than the Nash removal marginal at~$y$.
Hence, by adding a piece of cake of sufficiently small length $\varepsilon > 0$ adjacent to $z$ and removing a piece of cake of the same length adjacent to $y$, we obtain an allocation with a larger Nash welfare than $A$.
This yields the desired contradiction.
\end{proof}

Theorem~\ref{thm:MNW-EJR} allows us to determine the egalitarian ratio of the MNW solution, which turns out to be the same as that of the leximin solution (\Cref{thm:leximin-ratio}).

\begin{theorem}
\label{thm:MNW-ratio}
For all $n\ge 1$ and
$\alpha\in(0,1)$,
\[
\emph{Egal-ratio}_{n,\alpha}(\emph{MNW}) = \frac{\alpha}{n-(n-1)\alpha}.
\]
\end{theorem}

\begin{proof}
For the upper bound, consider the same instance $R$ as in \Cref{thm:leximin-ratio}: $W_i = [(i-1)\alpha/n,i\alpha/n]$ for $i=1,\dots,n-1$, and $W_n = [(n-1)\alpha/n,1]$.
By the inequality of arithmetic and geometric means, every MNW allocation $A$ gives a desired cake of length $\alpha/n$ to every agent, so
\begin{align*}
\text{Egal-ratio-alloc}_R(A)
&\le \frac{u_n(A)}{u_n([0,1])} = \frac{\alpha/n}{1-\frac{(n-1)\alpha}{n}}= \frac{\alpha}{n-(n-1)\alpha}.
\end{align*}

For the lower bound, as in the proof of \Cref{thm:leximin-ratio}, it suffices to show that MNW guarantees each agent~$i \in N$ a utility of at least $\min \{\alpha/n, \ell(W_i)\}$.
To this end, we fix an agent~$i$ and let $t = \min \{\alpha/n, \ell(W_i)\}$.
We have $|\{i\}| = \frac{\alpha}{n} \cdot \frac{n}{\alpha} \ge t \cdot \frac{n}{\alpha}$ and $\ell(W_i) \ge t$.
Since the MNW solution satisfies EJR by \Cref{thm:MNW-EJR}, agent~$i$ must receive utility at least $t = \min \{\alpha/n, \ell(W_i)\}$, as desired.
\end{proof}

Next, we show that the MNW solution is the unique rule within a family of welfare-maximizer rules that provides either PJR or EJR.
\begin{definition}
Let $f:[0, 1] \to [-\infty, \infty)$ be a strictly increasing function which is differentiable in $(0, 1)$.
Given an instance, the \emph{$f$-welfare-maximizer rule} chooses a piece of cake $A$ with $\ell(A)\le\alpha$ whose \emph{$f$-welfare}, defined as the sum $\sum_{i \in N} f(u_i(A))$, is maximized. 
\end{definition}
Note that the $f$-welfare-maximizer rule with $f(x) = \ln x$ is equivalent to the MNW solution; the same is true for $f(x) = c\ln x + d$ for any real constants $c > 0$ and $d$.
For all of these functions $f$, it holds that $f'(x) = c/x$ for some constant $c$.

\begin{theorem}
\label{thm:welfare-maximizer}
Let $f:[0, 1] \to [-\infty, \infty)$ be a strictly increasing function which is differentiable in $(0, 1)$.
If the $f$-welfare-maximizer rule satisfies PJR, then there exists a constant $c$ such that $f'(x) = c/x$ for all $x\in (0,1)$.
The same statement holds for EJR.
\end{theorem}

\begin{proof}
Since PJR is implied by EJR, it suffices to prove the theorem for PJR.
Let $f$ be such that the $f$-welfare-maximizer rule satisfies PJR.

First, we show that for all $x,y > 0$ with $x + y < 1$, it holds that
\begin{equation}
\frac{f'(x)}{f'(y)} = \frac{y}{x}.  \label{eq:ratio}
\end{equation}
Suppose for contradiction that $\frac{f'(x)}{f'(y)} \ne \frac{y}{x}$ for some such $x,y$.
Assume that $\frac{f'(x)}{f'(y)} > \frac{y}{x}$; otherwise, we can switch the roles of $x$ and $y$.
Let $n_x$ and $n_y$ be positive integers such that
\[
\frac{f'(x)}{f'(y)} > \frac{n_y}{n_x} > \frac{y}{x}.
\]
We construct an instance as follows.
Let $\alpha = x + y < 1$.
The cake is a union of two disjoint intervals: $S_x$ with length~$\ell(S_x) = 1 - y > x$, and~$S_y$ with length $\ell(S_y) = y$.
The set of agents $N$ is composed of two disjoint sets~$N_x$ and~$N_y$ with $|N_x| = n_x$ and $|N_y| = n_y$ (so $n = n_x + n_y$) such that the agents in~$N_x$ (resp.,~$N_y$) only approve the piece of cake~$S_x$ (resp.,~$S_y$).
Since
\[
|N_y| = n_y > y \cdot \frac{n}{\alpha} = y \cdot \frac{n_x + n_y}{x + y}
\]
and all agents in $N_y$ approve a common piece of cake of length $y$, by the PJR condition, an allocation $A$ chosen by the $f$-welfare-maximizer rule must include the entire piece $S_y$.
In addition, as $\ell(S_x) > x$, $A$ cannot include the entire piece $S_x$.
However, since $n_x \cdot f'(x) > n_y \cdot f'(y)$, the $f$-welfare of $A$ can be improved by including an $\varepsilon$-length more of $S_x$ and an $\varepsilon$-length less of $S_y$, for some sufficiently small $\varepsilon > 0$.
This contradicts the assumption that $A$ maximizes the $f$-welfare among all allocations of length at most~$\alpha$, and therefore establishes \eqref{eq:ratio}.

Next, we extend \eqref{eq:ratio} by showing that $\frac{f'(x)}{f'(y)} = \frac{y}{x}$ for all $x,y\in(0,1)$.
Fix $x,y\in(0,1)$, and let $\varepsilon > 0$ be such that $x + \varepsilon < 1$ and $y + \varepsilon < 1$.
From \eqref{eq:ratio}, we have $\frac{f'(x)}{f'(\varepsilon)} = \frac{\varepsilon}{x}$ and $\frac{f'(y)}{f'(\varepsilon)} = \frac{\varepsilon}{y}$.
Dividing the first relation by the second yields $\frac{f'(x)}{f'(y)} = \frac{y}{x}$.

Finally, we show that there exists a constant $c$ such that $f'(x) = c/x$ for all $x\in(0,1)$.
Fix $b\in (0,1)$.
For any $x\in (0,1)$, we have $\frac{f'(x)}{f'(b)} = \frac{b}{x}$, and so
\[
f'(x) = \frac{b \cdot f'(b)}{x}.
\]
Thus, we can take $c = b \cdot f'(b)$.
This completes the proof of the theorem.
\end{proof}

To end this section, we strengthen \Cref{thm:MNW-EJR} by adapting a fairness notion due to \citet{AzizBoMo20}.

\begin{definition}
\label{def:AFS}
Given an instance, an allocation~$A$ with $\ell(A) \le \alpha$ is said to provide \emph{average fair share (AFS)} if for every positive real number $t$ and every set of agents $N^*\subseteq N$ such that $|N^*| \geq t\cdot n/\alpha$ and $\ell\left(\bigcap_{i \in N^*} W_i\right) \geq t$, it holds that $\sum_{i\in N^*} u_i(A) \geq |N^*|\cdot t$.
\end{definition}

Since $\sum_{i\in N^*} u_i(A) \geq |N^*|\cdot t$ implies that $u_j(A) \ge t$ for at least one $j\in N^*$, AFS is a strengthening of EJR.
As with EJR and PJR, we say that a mechanism satisfies AFS if it outputs an allocation that provides AFS for every instance.

\begin{theorem}\label{thm:MNW-AFS}
The MNW solution satisfies AFS.
\end{theorem}

\begin{proof}
The proof is very similar to that of \Cref{thm:MNW-EJR}.
Suppose for contradiction that MNW violates AFS for some instance with parameter $\alpha$, and consider a value $t$ and a set of agents~$N^*$ with $|N^*| \geq t \cdot n/\alpha$ who commonly approve a piece of cake~$S$ with $\ell(S) \geq t$ such that Definition~\ref{def:AFS} is not satisfied.
Let~$A$ be the piece of cake chosen by MNW, and assume without loss of generality that $\ell(A) = \alpha$.
Since Definition~\ref{def:AFS} is violated, we have $\sum_{i\in N^*}u_i(A)  < |N^*|\cdot t$.

Suppose that $A$ is a disjoint union of the intervals $I_1,\dots,I_k$, where each interval $I_j$ is valued either entirely or not at all by each agent.
We have that $\ell(S) \ge t$, $\sum_{i\in N^*}u_i(A)  < |N^*|\cdot t$, and all agents in $N^*$ approve the entire piece of cake $S$.
In particular, not all of~$S$ is contained in $A$.
Let $z$ be a point in $S\setminus A$ that is not a breakpoint.
The contribution of each agent~$i \in N^*$ to the Nash addition marginal at point~$z$ is $(\ln u_i(A))' = 1 / u_i(A)$.
Thus,
\[
\text{NAM}_A(z) \geq  \sum_{i\in N^*} \frac{1}{u_i(A)} \geq \frac{|N^*|^2}{\sum_{i\in N^*}u_i(A)} > \frac{|N^*|^2}{|N^*|\cdot t} = \frac{|N^*|}{t} \ge  \frac{n}{\alpha},
\]
where for the second inequality we apply the inequality of arithmetic and harmonic means.
On the other hand, as in the proof of \Cref{thm:MNW-EJR}, there exists a point~$y\in A$ whose Nash removal marginal is at most $n/\alpha$.
In particular, the Nash addition marginal at $z$ is  strictly greater than the Nash removal marginal at~$y$.
Hence, by adding a piece of cake of sufficiently small length $\varepsilon > 0$ adjacent to $z$ and removing a piece of cake of the same length adjacent to $y$, we obtain an allocation with a larger Nash welfare than~$A$.
This yields the desired contradiction.
\end{proof}

\section{Non-Uniform Costs}
\label{app:non-uniform-costs}

In this section, we consider an extension of our model where the cost of selecting the cake may be non-uniform.
Specifically, there is a (public) \emph{cost function} $c:[0,1]\rightarrow\mathbb{R}_{\ge 0}$, which captures the cost for different parts of the cake.
We assume that the cost function is piecewise constant, and without loss of generality that $\int_0^1 c\, dx  = 1$ (if the whole cake has cost $0$, the mechanism can simply always choose the whole cake).
Note that the main model of our paper corresponds to the cost function being the constant~$1$ over the entire cake.
We still consider piecewise uniform utility functions of the agents, and allow the mechanism to choose a piece of cake with cost at most a given parameter $\alpha\in(0,1)$.

We can generalize the leximin solution to this setting as follows.
First, consider all breakpoints of the cost function, where the breakpoints are defined in the same way as for the utility functions.
Then, for each piece of cake between two consecutive breakpoints, we choose a fraction of at most $\alpha$ of this cake by implementing the canonical leximin solution with the same $\alpha$.
The generalized leximin solution then returns the union of the chosen cake.
By linearity, the chosen cake has cost at most $\alpha$.

\begin{theorem}
For all $n\ge 1$ and $\alpha\in(0,1)$, when the cost function is piecewise constant, the generalized leximin solution is truthful and has egalitarian ratio $\frac{\alpha}{n-(n-1)\alpha}$.
\end{theorem}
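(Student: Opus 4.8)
The plan is to exploit the fact that the generalized leximin solution decomposes the problem into independent subproblems, one for each interval between consecutive cost breakpoints, and then to ``lift'' both truthfulness and the egalitarian ratio from these subproblems to the global instance. Write $J_1,\dots,J_k$ for the intervals induced by the cost breakpoints, so the cost function is constant on each $J_t$. Within $J_t$ the cost is uniform, so selecting a fraction at most $\alpha$ of the length of $J_t$ costs at most $\alpha$ times the total cost of $J_t$; summing over $t$ and using $\int_0^1 c\,dx = 1$ confirms that the returned piece has cost at most $\alpha$, so the mechanism is feasible, and it suffices to analyze the canonical leximin outcome $A_{J_t}$ on each interval.

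For truthfulness, I would first observe that both the strategy space and each agent's utility decompose across the intervals. An agent's report $W_i'$ induces the restricted report $W_i'\cap J_t$ on each $J_t$, and these restrictions can be chosen independently; meanwhile her true utility is the sum $\sum_t \ell(A_{J_t}\cap W_i'\cap W_i)$ of her true utilities on the individual intervals. Crucially, because the mechanism runs on each interval in isolation, her report on $J_t$ affects only the outcome $A_{J_t}$. Each subproblem on $J_t$ (rescaled to $[0,1]$) is an ordinary cake-sharing instance in which agent $i$'s true desired piece is $W_i\cap J_t$ and she reports $W_i'\cap J_t$, so by Theorem~\ref{thm:leximin-truthful} her true utility on that interval is maximized by reporting truthfully there; when $\ell(W_i\cap J_t)=0$ she obtains zero utility on $J_t$ under every report, so truthful reporting is still weakly optimal. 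Summing the per-interval guarantees, reporting $W_i'=W_i$ maximizes the agent's total utility, giving truthfulness.

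For the egalitarian ratio I would establish matching lower and upper bounds. For the lower bound, fix any instance and any agent $i$ with $\ell(W_i)>0$. On each interval $J_t$ on which $i$ has positive value, the canonical leximin lower bound of Theorem~\ref{thm:leximin-ratio}, applied to the (at most $n$) agents active on $J_t$, yields $\ell(A_{J_t}\cap W_i)\ge \frac{\alpha}{n-(n-1)\alpha}\,\ell(W_i\cap J_t)$; here I would invoke the elementary monotonicity fact that $\frac{\alpha}{m-(m-1)\alpha}$ is non-increasing in $m$, so having fewer active agents on $J_t$ only improves the per-interval ratio. Summing over all intervals gives $u_i(A_i)\ge \frac{\alpha}{n-(n-1)\alpha}\,u_i([0,1])$, and taking the minimum over $i$ shows the egalitarian ratio is at least $\frac{\alpha}{n-(n-1)\alpha}$. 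For the upper bound, I would take the constant cost function $c\equiv 1$, a valid piecewise constant cost for which the generalized leximin coincides with the ordinary leximin, and reuse the tight instance from the proof of Theorem~\ref{thm:leximin-ratio}; this shows the ratio is at most $\frac{\alpha}{n-(n-1)\alpha}$, completing the equality.

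The main obstacle is making the decomposition argument airtight rather than heavy: the essential work is verifying that the mechanism genuinely treats the intervals independently, so that a deviation localized to one interval cannot leak into another, and carefully handling agents that are inactive on some intervals---both in the truthfulness step, where such an agent's report on that interval is irrelevant, and in the lower bound, where the number of active agents varies across intervals and the monotonicity of $\frac{\alpha}{m-(m-1)\alpha}$ is precisely what preserves the uniform bound.
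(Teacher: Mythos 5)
Your proposal is correct and takes essentially the same approach as the paper's proof: decompose the instance at the public cost breakpoints, lift truthfulness by applying Theorem~\ref{thm:leximin-truthful} on each interval and summing utilities, and lift the egalitarian ratio by applying the lower bound of Theorem~\ref{thm:leximin-ratio} per interval together with linearity, while the upper bound is inherited from the constant-cost case. Your additional care about agents inactive on some intervals and the monotonicity of $\frac{\alpha}{m-(m-1)\alpha}$ in $m$ merely makes explicit details that the paper's shorter argument leaves implicit.
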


\begin{proof}
We first establish truthfulness.
The cost function is public and its breakpoints cannot be controlled by the agents, so we can consider the piece of cake between each pair of consecutive breakpoints separately.
By Theorem~\ref{thm:leximin-truthful}, for each piece, reporting the utility function truthfully yields the highest utility to each agent.
Since the utility for the whole cake is simply the sum of the utilities for different pieces, the mechanism is truthful.

The upper bound of the egalitarian ratio follows from Theorem~\ref{thm:leximin-ratio} since the cost function in the current theorem is more general.
For the lower bound, observe that by Theorem~\ref{thm:leximin-ratio}, for the piece of cake between each pair of consecutive breakpoints, each agent receives a utility of at least a fraction $\frac{\alpha}{n-(n-1)\alpha}$ of her utility for this entire piece of cake.
The desired bound then follows by linearity.
\end{proof}

We remark that since we consider more general cost functions in this section, the egalitarian ratio $\frac{\alpha}{n-(n-1)\alpha}$ is still optimal by Theorem~\ref{thm:impossibility}.
However, unlike the canonical leximin solution, the generalized version is no longer Pareto optimal, since it may be possible to improve the utility of all agents by choosing more than an $\alpha$ fraction in certain parts of the cake and less in other parts.
An interesting question is therefore whether we can obtain Pareto optimality while maintaining truthfulness and the egalitarian ratio.

\end{document}